 \newtheorem{lemma}{Lemma}
 \newtheorem{definition}{Definition}
 \newtheorem{remark}{Remark}
\newtheorem{assumption}{Assumption}
\newtheorem{theorem}{Theorem}
\def\rL{\mathcal{L}}
\def\R{\mathbb{R}}
\def\rP{\mathbb{P}}
\def\supp{\mathop{\rm supp}}
\def\Car{\mathop{\rm Car}}
\def\L{\mathop{\rm L}}
\def\Proj{\mathop{\rm Proj}}
\def\A{{\mathcal A}}
\def\B{{\mathcal B}}
\def\G{{\mathcal G}}
\def\C{{\mathcal C}}
\def\P{{\mathcal P}}
\def\X{{\mathcal X}}
\def\Y{{\mathcal Y}}
\def\M{{\mathcal M}}
\def\cE{\mathbb{E}}
\def\U{{\mathcal U}}
\def\by{{\bf y}}
\def\bu{{\bf u}}
\def\ess{{\mathrm{ess}}}
\def\sX{{\mathsf X}}
\def\sY{{\mathsf Y}}
\def\sY{{\mathsf Y}}
\def\sM{{\mathsf M}}
\def\sE{{\mathsf E}}
\def\sU{{\mathsf U}}
\begin{document}
\setcounter{page}{1}
\bigskip
\bigskip
\title[N. Saldi: Common Information Approach for Static Teams with Polish Spaces] {Common Information Approach for Static Team Problems with Polish Spaces and Existence of Optimal Policies}
\author{N. Saldi$^1$
 }
\thanks{$^1$Department of Mathematics, Bilkent University, Cankaya, Ankara, Turkey,
\\ \indent\,\,\,e-mail: naci.saldi@bilkent.edu.tr
\\ \indent
  \em \,\,\,}

\begin{abstract}
In this paper, we demonstrate the existence of team-optimal strategies for static teams under observation-sharing information structures. Assuming that agents can access shared observations, we begin by converting the team problem into an equivalent centralized stochastic control problem through the introduction of a topology on policies. We subsequently apply conventional methods from stochastic control to prove the existence of team-optimal strategies. This study expands upon the widely recognized common information approach for team problems, originally designed for discrete scenarios, and adapts it to a more abstract continuous framework. The primary difficulty in this context is to establish the appropriate topology on policies.

\bigskip
\noindent Keywords: Team Decision Theory, Decentralized Control, Common Information Approach, Existence of Optimal Policies.

\bigskip \noindent AMS Subject Classification: 93E20, 49N80, 60G09
\end{abstract}
\maketitle

\smallskip

\section{Introduction}\label{sec1}

Team decision theory has been introduced by Marschak \cite{mar55} to investigate the collective actions of a group of agents operating in a decentralized manner with the aim of optimizing a shared cost function. Radner, as highlighted in \cite{rad62}, established crucial insights into static teams, including the connections between person-by-person optimality and team optimality. Dynamic teams and the characterization and categorization of information structures saw significant progress thanks to Witsenhausen's influential papers, as documented in \cite{wit71,wit75,wit88,WitsenStandard,WitsenhausenSIAM71,Wit68}. In particular, Witsenhausen's well-known counterexample in \cite{Wit68} shed light on the challenges posed by decentralized information structures in such models. For a more extensive overview of team decision theory and an in-depth exploration of the literature, we direct readers to \cite{YukselBasarBook,SaYu22}.

The fundamental distinction between team decision problems and classical centralized decision problems lies in the decentralized information structure. In team decision problems, agents are unable to share their information with each other. This decentralized information setup restricts the application of conventional tools used in centralized decision theory, such as dynamic programming, convex analytical techniques, and linear programming. Consequently, establishing the existence and structure of optimal policies becomes a notably challenging task within the realm of team decision theory.

In the existing literature, team decision problems are typically addressed through three primary approaches, as outlined in \cite{CDCTutorial}: (i) the common information approach \cite{NayyarBookChapter, NayyarMahajanTeneketzis}, (ii) the designer's approach \cite{WitsenStandard, MahajanThesis, MahTen09}, and (iii) the person-by-person approach \cite{rad62, marrad72}. Among these methods, the most effective one is the common information approach.

In the common information approach, it is assumed that agents possess shared information among themselves, which could be in the form of delayed or periodic observation sharing. Consequently, the information held by each agent can be divided into two categories: common information and private information. In other words, there exists a coordinator who observes the common information and shares this information to the other agents. Within the common information approach, the objective is to frame the problem as a centralized stochastic control problem from the coordinator's perspective. By adopting this viewpoint, classical stochastic control techniques like dynamic programming can then be employed to compute the optimal team decision strategy.

The common information approach was initially developed for discrete setup; that is, the state, observation, and action spaces are all finite. The objective of this paper is to broaden the application of this approach to continuous spaces. The primary difficulty in this context is to establish the appropriate topology on policies. 
To achieve this, we introduce a topology on the collection of policies inspired by the topology introduced in \cite[Section 2.4]{BoArGh12}. Indeed, this topology is first introduced and used in \cite{Sal20} to establish the existence of team optimal policies for fully decentralized team problems. In this work, assuming that agents have access to shared observations, our first step involves transforming the team problem into an equivalent centralized stochastic control problem using this topology. Subsequently, we employ conventional techniques from stochastic control to establish the existence of optimal strategies for the team.

The equivalence between the team problem and its centralized counterpart also facilitates the computation of this optimal policy. Nevertheless, due to the continuous nature of the spaces involved, this results in an optimization problem over an infinite-dimensional space. This complexity can be addressed through approximation methods, such as quantization \cite{SaYuLi17}.

In the literature, there is a limited number of findings regarding the existence of team-optimal solutions for classical (non-observation sharing) team problems. To date, papers \cite{GuYuBaLa15,YuSa17} have primarily focused on establishing the existence of optimal policies for static teams and a specific class of sequential dynamic teams. These studies adopt a strategic measure approach, where strategic measures represent the probability measures resulting from policies applied to the product of state space, observation spaces, and action spaces. In this approach, the process begins with the identification of a topology for the set of strategic measures, followed by demonstrating the relative compactness of this set and the lower semi-continuity of the cost function. If the set of strategic measures is proven to be closed, one can then invoke the Weierstrass Extreme Value Theorem to establish the existence of optimal policies. We can also employ strategic measure approach to address team decision problems under observation sharing information structure and establish the existence of optimal policies. In the sequel, we will outline a concise overview of this approach. Consequently, the strategic measure approach can be seen as a complementary method to ours to tackle such information structures.

\subsection{Notation and Conventions}

For a metric space $\sE$, the Borel $\sigma$-algebra is denoted by $\mathcal{E}$.  We let $C_0(\sE)$ and $C_c(\sE)$ denote the set of all continuous real functions on $\sE$ vanishing at infinity and the set of all continuous real functions on $\sE$ with compact support, respectively. For any $g \in C_c(\sE)$, let $\supp(g)$ denote its support. Let $\M(\sE)$ and $\P(\sE)$ denote the set of all finite signed measures and probability measures on $\sE$, respectively. A sequence $\{\mu_n\}$ of finite signed measures on $\sE$ is said to converge with respect to total variation distance (see \cite{HeLa03}) to a finite signed measure $\mu$ if $ \lim_{n\rightarrow\infty} 2\sup_{D \in \mathcal{E}} |\mu_n(D) - \mu(D)|=0$. A sequence $\{\mu_n\}$ of finite signed measures on $\sE$ is said to converge weakly (see \cite{HeLa03}) to a finite signed measure $\mu$ if $\int_{\sE} g d\mu_n \rightarrow \int_{\sE} g d\mu$ for all bounded and continuous real function $g$ on $\sE$. Let $\sE_1$ and $\sE_2$ be two metric spaces. For any $\mu \in \M(\sE_1 \times \sE_2)$, we denote by $\Proj_{\sE_1}(\mu)(\,\cdot\,) \triangleq \mu(\,\cdot\,\times \sE_2)$ the marginal of $\mu$ on $\sE_1$. Let $\sE = \prod_{i=1}^N \sE_i$ be a finite product space. For each $j,k = 1,\ldots,N$ with $k < j$, we denote $\sE^{^{[k:j]}} = \prod_{i=k}^j \sE_i$ and $\sE_{-j} = \prod_{i \neq j} \sE_i$. A similar convention also applies to elements of these sets which will be denoted by bold lower case letters. For any set $D$, let $D^c$ denote its complement. For any random element $w$, $\L(w)$ denotes its distribution. Unless otherwise specified, the term `measurable' will refer to Borel measurability in the rest of the paper.

\section{Intrinsic Model for Sequential teams}\label{sec1}

Team problems with static information structure and with common information has the following components:
\begin{align}
\bigl\{ (\sX, {\mathcal X}), (\sX_0, {\mathcal X}_0),  (\sU_i,{\mathcal U}_i), (\sY_i,{\mathcal Y}_i),\rP, i=1,\ldots,N,\bigr\} \nonumber
\end{align}
where locally compact Polish spaces (i.e., complete and separable metric spaces) $\sX$, $\sX_0$, $\sU_i$, and $\sY_i$ ($i=1,\ldots,N$) endowed with Borel $\sigma$-algebras denote the state space, common information space, and action and observation spaces of Agent~$i$, respectively. Here $N$ is the number of agents and $x \in \sX$ is the state variable. For each $i$, the observations and actions of Agent~$i$ are denoted by $u_i$ and $y_i$, respectively.
The $\sY_i$-valued observation variable for Agent~$i$ is given by $y_i \sim W_i(\,\cdot\,|x)$, where $W_i$ is a stochastic kernel from $\sX$ to $\sY_i$ (see Definition~\ref{stoc-ker}). In addition to above variables, we also assume that there is a random variable $x_0$, which lives in $\sX_0$ and correlated to the state $x$, such that it is a common information to all agents in the team problem. A probability measure $\rP$ denotes the joint law of the random variables $(x,x_0)$. Let $\mu_0$ and $\mu$ denote the marginals of $\rP$ on $\sX_0$ and $\sX$, respectively.

\begin{assumption}\label{as1}
For all $i$, $W_i: \sX \rightarrow \P(\sY_i)$ is continuous with respect to the total variation norm and  $W_i(dy_i|x) = q_i(y_i,x) \, \mu_i(dy_i)$ for some probability measure $\mu_i$ on $\sY_i$.
\end{assumption}

For each $i=1,\ldots,N$, a control strategy $\gamma_i$ for Agent~$i$ is a $\mu_i\otimes\mu_0$-stochastic kernel from $\sY_i \times \sX_0$ to $\sU_i$ (see Definition~\ref{stoc-ker}). Let $\Gamma_i$ denote the set of all equivalence classes of control strategies for Agent~$i$; that is, two control strategies $\gamma_i$ and $\tilde{\gamma}_i$ are equivalent if $\gamma_i=\tilde{\gamma}_i$ $\mu_i\otimes\mu_0$-a.e.. Let ${\bf \Gamma} = \prod_{k} \Gamma_k$.

For any $\underline{\gamma} = (\gamma_1, \cdots, \gamma_N)$, we let the
(expected) cost of the team problem be defined by
\[J(\underline{\gamma}) \triangleq E[c(x,x_0, {\bf y},{\bf u})],\]
for some lower semi-continuous cost function $c: \sX \times \sX_0 \times \prod_i \sY_i \times \prod_i \sU_i \to [0,\infty)$, where ${\bf u} \triangleq (u_1,\ldots,u_N) \sim \prod_{i=1}^N \gamma_i(y_i,x_0)(\,\cdot\,)$ and ${\bf y} \triangleq (y_1,\ldots,y_N)$.

\begin{definition}\label{Def:TB1}
For a given stochastic team problem, a policy (strategy)
${\underline \gamma}^*:=({\gamma_1}^*,\ldots, {\gamma_N}^*)\in {\bf \Gamma}$ is
an {\it optimal team decision rule} if
\begin{equation}
J({\underline \gamma}^*)=\inf_{{{\underline \gamma}}\in {{\bf \Gamma}}}
J({{\underline \gamma}})=:J^*. \nonumber
\end{equation}
The cost level $J^*$ achieved by this strategy is the {\it optimal team cost}.
\end{definition}

\section{Duals of Vector Valued Functions}\label{sec0}

To make the paper as self-contained as possible, in this section we review duality results for vector valued functions that will be used to construct the topology for team policies. 
 
Let $(\sX,\X,\mu)$ be a probability space, where $\sX$ is a Polish space. Let $(\sY,\|\cdot\|_{\sY})$ be a separable Banach space and let $\sY^*$ denote the topological dual of $\sY$ with the induced norm $\|\cdot\|_{\sY^*}$ which turns $\sY^*$ into a Banach space. The duality pairing between any $y \in \sY$ and any $y^* \in \sY^*$ is denoted by $\langle y^*,y \rangle$. Hence, for any $y^* \in \sY^*$, the mapping $\sY \ni y \mapsto \langle y^*,y \rangle \in \R$ is linear and continuous.

We now define Bochner integrable functions from $(\sX,\X,\mu)$ to $\sY$. Similar to the definition of measurable functions, we start with definition of simple functions. A function $f: \sX \rightarrow \sY$ is said to be simple if there exists $y_1,\ldots,y_n \in \sY$ and $E_1,\ldots,E_n \in \X$ such that
$
f(x) = \sum_{i=1}^n y_i 1_{E_i}(x). \nonumber
$
Define the Bochner integral of $f$ with respect to $\mu$ as
$
\int_{\sX} f(x) \, \mu(dx) \triangleq \sum_{i=1}^n y_i \mu(E_i). \nonumber
$
A function $f:\sX \rightarrow \sY$ is  said to be strongly measurable, if there exists a sequence $\{f_n\}$ of simple functions with $\lim_{n\rightarrow\infty} \|f_n(x)-f(x)\|_{\sY} = 0$ $\mu$-a.e.. The strongly measurable function $f$ is Bochner-integrable \cite{DiUh77} if $\int_{\sX} \|f(x)\|_{\sY} \, \mu(dx) < \infty$. In this case, the integral is given by
$
\int_{\sX} f(x) \, \mu(dx) = \lim_{n\rightarrow\infty} \int_{\sX} f_n(x) \, \mu(dx), \nonumber
$
where $\{f_n\}$ is a sequence of simple functions which approximates $f$. Let $L_1\bigl(\mu,\sY\bigr)$ denote the set of all equivalence classes of Bochner-integrable functions from $(\sX,\X,\mu)$ to $\sY$ endowed with the norm
\begin{align}
\|f\|_1 \triangleq \int_{\sX} \|f(x)\|_{\sY} \, \mu(dx). \nonumber
\end{align}
With this norm, $L_1\bigl(\mu,\sY\bigr)$ is a separable Banach space. 

We now identify the topological dual of $L_1\bigl(\mu,\sY\bigr)$ which is denoted by $L_1\bigl(\mu,\sY\bigr)^*$. We start with the definition of $w^*$-measurable functions from $\sX$ to $\sY^*$. A function $\gamma: \sX \rightarrow \sY^*$ is called $w^*$-measurable \cite[p. 18]{CeMe97} if the mapping $\sX \ni x \mapsto \langle \gamma(x), y \rangle \in \R$ is $\X/\B(\R)$-measurable for all $y \in \sY$. Let $\rL\bigl(\mu,\sY^*\bigr)$ denote the set of all equivalence classes of $w^*$-measurable functions. Then, we define the following subset
\begin{align}
&\rL_{\infty}\bigl(\mu,\sY^*\bigr) \triangleq \biggl\{ \gamma \in \rL\bigl(\mu,\sY^*\bigr): \|\gamma\|_{\infty} \triangleq\ess \sup_{x \in \sX} \|\gamma(x)\|_{\sY^*} < \infty \biggr\}, \label{eq1}
\end{align}
where $\ess \sup$ is taken with respect to the measure $\mu$. Then, we have the following theorem.

\begin{theorem}\cite[Theorem 1.5.5]{CeMe97}\label{mainduality}
For any $\gamma \in \rL_{\infty}\bigl(\mu,\sY^*\bigr)$ and $f \in L_1\bigl(\mu,\sY\bigr)$, let
\begin{align}
T_{\gamma}(f) \triangleq \int_{\sX} \langle \gamma(x), f(x) \rangle \, \mu(dx). \nonumber
\end{align}
Then the map $\rL_{\infty}\bigl(\mu,\sY^*\bigr) \ni \gamma \mapsto T_{\gamma} \in L_1\bigl(\mu,\sY\bigr)^*$ is an isometric isomorphism from $\rL_{\infty}\bigl(\mu,\sY^*\bigr)$ to $L_1\bigl(\mu,\sY\bigr)^*$. Hence, we can identify $L_1\bigl(\mu,\sY\bigr)^*$ with $\rL_{\infty}\bigl(\mu,\sY^*\bigr)$. For any $f \in L_1\bigl(\mu,\sY\bigr)$ and $\gamma \in \rL_{\infty}\bigl(\mu,\sY^*\bigr)$, the duality pairing is given by
\begin{align}
\langle\langle \gamma , f \rangle\rangle \triangleq \int_{\sX} \langle \gamma(x), f(x) \rangle \, \mu(dx).\nonumber
\end{align}
\end{theorem}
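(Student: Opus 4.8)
The plan is to establish that $\gamma \mapsto T_\gamma$ is (i) well-defined into $L_1(\mu,\sY)^*$, (ii) linear, (iii) isometric, and (iv) surjective, since an isometry is automatically injective and a linear isometric bijection is precisely an isometric isomorphism. First I would verify that $T_\gamma$ is a bounded linear functional: for $\gamma \in \rL_\infty(\mu,\sY^*)$ and $f \in L_1(\mu,\sY)$, the scalar map $x \mapsto \langle \gamma(x), f(x)\rangle$ is measurable (this uses $w^*$-measurability of $\gamma$ together with strong measurability of $f$, approximating $f$ by simple functions) and satisfies the pointwise bound $|\langle \gamma(x), f(x)\rangle| \le \|\gamma(x)\|_{\sY^*}\|f(x)\|_{\sY} \le \|\gamma\|_\infty \|f(x)\|_{\sY}$ $\mu$-a.e., so the integral defining $T_\gamma(f)$ converges absolutely and $|T_\gamma(f)| \le \|\gamma\|_\infty \|f\|_1$. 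Linearity in $f$ is immediate, giving $T_\gamma \in L_1(\mu,\sY)^*$ with $\|T_\gamma\| \le \|\gamma\|_\infty$; linearity of $\gamma \mapsto T_\gamma$ is clear from the integral.

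The next step is the reverse norm inequality $\|T_\gamma\| \ge \|\gamma\|_\infty$, which together with the above yields the isometry. For any $\varepsilon > 0$, the set $A = \{x : \|\gamma(x)\|_{\sY^*} > \|\gamma\|_\infty - \varepsilon\}$ has positive $\mu$-measure by definition of the essential supremum. On $A$ I would select, for each $x$, an element $y(x)$ of the unit ball of $\sY$ with $\langle \gamma(x), y(x)\rangle$ close to $\|\gamma(x)\|_{\sY^*}$; feeding a suitably normalized $f(x) = y(x)1_A(x)$ into $T_\gamma$ shows $\|T_\gamma\|$ can be made at least $\|\gamma\|_\infty - \varepsilon$. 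The delicate point here is producing a \emph{measurable} selection $x \mapsto y(x)$ of near-norming vectors; I would handle this by exploiting separability of $\sY$ — fixing a countable dense subset $\{y_k\}$ of the unit ball and partitioning $A$ into measurable pieces on which a single $y_k$ is near-norming — so that $f$ is strongly measurable and lies in $L_1(\mu,\sY)$.

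The main obstacle is surjectivity: showing every $\Phi \in L_1(\mu,\sY)^*$ arises as $T_\gamma$ for some $\gamma \in \rL_\infty(\mu,\sY^*)$. Here I would argue that for each fixed $y \in \sY$ the map $E \mapsto \Phi(y\,1_E)$ defines a finite signed (countably additive) measure on $\X$ that is absolutely continuous with respect to $\mu$ — countable additivity following from dominated convergence applied to $\|y\,1_{E}\|_1$, and absolute continuity since $\mu(E)=0$ forces $y\,1_E = 0$ in $L_1$. The Radon–Nikodym theorem then produces a density $g_y \in L_1(\mu)$ with $\Phi(y\,1_E) = \int_E g_y\,d\mu$. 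The task becomes assembling these densities, which are \emph{linear} in $y$, into a single function $\gamma(x) \in \sY^*$ with $\langle \gamma(x), y\rangle = g_y(x)$ $\mu$-a.e. Using separability of $\sY$ I would define $\langle \gamma(\cdot),y_k\rangle = g_{y_k}(\cdot)$ on a countable dense set $\{y_k\}$, check the a.e.\ linearity and boundedness relations off a single null set, and extend by continuity to all of $\sY$, thereby obtaining a genuine $\sY^*$-valued $w^*$-measurable function; the uniform bound $\|\gamma\|_\infty \le \|\Phi\|$ comes from the already-established isometry estimate. Finally, since simple functions are dense in $L_1(\mu,\sY)$ and $T_\gamma$ agrees with $\Phi$ on them, continuity forces $T_\gamma = \Phi$, completing surjectivity. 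The separability-driven measurable-selection and density-assembly arguments are the technical heart; since the statement is quoted from \cite[Theorem 1.5.5]{CeMe97}, one may instead simply invoke that reference.
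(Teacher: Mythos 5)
The paper gives no proof of this statement at all: it is imported verbatim from \cite[Theorem 1.5.5]{CeMe97}, so there is nothing internal to compare your argument against, and your closing remark that one may simply invoke that reference is exactly what the author does. That said, your outline is a correct reconstruction of the standard proof of this duality (the Dunford--Pettis-type argument), and it correctly isolates the two technical hearts: the measurable selection of near-norming vectors for the reverse inequality $\|T_\gamma\| \geq \|\gamma\|_{\infty}$, and the assembly of the scalar Radon--Nikodym densities $g_y$ into a single $w^*$-measurable $\sY^*$-valued function. Both steps lean on separability of $\sY$, which is precisely the hypothesis under which the dual of $L_1\bigl(\mu,\sY\bigr)$ is the space of $w^*$-measurable (rather than strongly measurable) essentially bounded functions; this is the point of the paper's subsequent Remark, since here $\sY^* = \M(\sU)$ is neither reflexive nor separable and the naive identification would fail. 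Two small repairs to your sketch: (i) measurability of the set $A=\{x : \|\gamma(x)\|_{\sY^*} > \|\gamma\|_{\infty}-\varepsilon\}$ should be justified by writing $\|\gamma(x)\|_{\sY^*} = \sup_k |\langle \gamma(x),y_k\rangle|$ over a countable dense subset $\{y_k\}$ of the unit ball of $\sY$, which also makes the essential supremum in \eqref{eq1} well defined; (ii) in the surjectivity step, the bound $\|\gamma\|_{\infty}\le\|\Phi\|$ cannot be deferred to ``the already-established isometry,'' because you need $\gamma \in \rL_{\infty}\bigl(\mu,\sY^*\bigr)$ before $T_\gamma$ is even defined --- instead derive $|g_y(x)|\le \|\Phi\|\,\|y\|_{\sY}$ $\mu$-a.e.\ directly from $|\Phi(y 1_E)| \le \|\Phi\|\,\|y\|_{\sY}\,\mu(E)$, take the union of the exceptional null sets over a countable $\mathbb{Q}$-linear dense subspace, and extend by continuity off that single null set. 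With these repairs your argument is complete and self-contained, which is more than the paper provides.
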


\begin{remark}
If $\sY^*$ is reflexive or separable, then  $L_1\bigl(\mu,\sY\bigr)^*$ can be identified with set of strongly measurable functions that are $\mu$-essentially bounded \cite[Theorem 4.2.26]{PaWi18}. However, in our case, $\sY^*$ is to be the set of finite signed measures over some locally compact Polish space with total variation norm, which is obviously neither reflexive nor separable. 
\end{remark}

By Theorem~\ref{mainduality}, we equip $\rL_{\infty}\bigl(\mu,\sY^*\bigr)$ with $w^*$-topology induced by $L_1\bigl(\mu,\sY\bigr)$; that is, it is the smallest topology on $\rL_{\infty}\bigl(\mu,\sY^*\bigr)$ for which the mapping
\begin{align}
\rL_{\infty}\bigl(\mu,\sY^*\bigr) \ni \gamma \mapsto \int_{\sX} \langle \gamma(x), f(x) \rangle \, \mu(dx) \in \R \nonumber
\end{align}
is continuous for all $f \in L_1\bigl(\mu,\sY\bigr)$. We write $\gamma_{\lambda} \rightharpoonup^* \gamma$, if $\gamma_{\lambda}$ converges to $\gamma$ in $\rL_{\infty}\bigl(\mu,\sY^*\bigr)$ with respect to $w^*$-topology.

Suppose $G$ is a subset of $\sY^*$ and define
\begin{align}
\rL_{\infty}\bigl(\mu,G\bigr) \triangleq \biggl\{ \gamma \in \rL_{\infty}\bigl(\mu,\sY^*\bigr): \gamma(y) \in G \text{ } \mu-\text{a.e.} \biggr\}. \nonumber
\end{align}
If $G$ is the unit ball, then $\rL_{\infty}\bigl(\mu,G\bigr)$ is also the unit ball in $\rL_{\infty}\bigl(\mu,\sY^*\bigr)$, and so, by Banach-Alaoglu Theorem \cite[Theorem 5.18]{Fol99}, it is  compact with respect to $w^*$-topology. Since $L_1\bigl(\mu,\sY\bigr)$ is separable, by \cite[Lemma 1.3.2]{HeLa03}, $\rL_{\infty}\bigl(\mu,G\bigr)$ is metrizable, and so, is also sequentially compact.

\subsection{A Particular Case}\label{particular}

Let $\sU$ be a locally compact Polish space endowed with its Borel $\sigma$-algebra $\U$. For any $g \in C_0(\sU)$, let
\begin{align}
\|g\| \triangleq \sup_{u \in \sU} |g(u)| \nonumber
\end{align}
which turns $(C_0(\sU),\|\cdot\|)$ into a separable Banach space. Let $\|\cdot\|_{TV}$ denote the total variation norm on $\M(\sU)$.

\begin{theorem}\cite[Theorem 7.17]{Fol99}
For any $\nu \in \M(\sU)$ and $g \in C_0(\sU)$, let $I_{\nu}(g) \triangleq \langle \nu , g \rangle$, where
\begin{align}
\langle \nu, g \rangle \triangleq \int_{\sU} g(u) \, \nu(du). \nonumber
\end{align}
Then the map $\nu \mapsto I_{\nu}$ is an isometric isomorphism from $\M(\sU)$ to $C_0(\sU)^*$. Hence, we can identify $C_0(\sU)^*$ with $\M(\sU)$. For any $g \in C_0(\sU)$ and $\nu \in \M(\sU)$, the duality pairing is given by
\begin{align}
\langle \nu, g \rangle = \int_{\sU} g(u) \, \nu(du).\nonumber
\end{align}
Furthermore, the norm on $\M(\sU)$ induced by this duality is total variation norm.
\end{theorem}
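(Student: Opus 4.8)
The plan is to establish separately that the map $\nu \mapsto I_{\nu}$ is (a) a well-defined norm-decreasing linear map into $C_0(\sU)^*$, (b) an isometry, and (c) surjective; the identification of the induced norm with total variation then follows from (b). For well-definedness and the easy bound I would observe that for $\nu \in \M(\sU)$ linearity of $I_{\nu}$ is immediate and $|I_{\nu}(g)| = |\int_{\sU} g \, d\nu| \le \|g\| \, |\nu|(\sU) = \|g\| \, \|\nu\|_{TV}$, so $I_{\nu} \in C_0(\sU)^*$ with $\|I_{\nu}\| \le \|\nu\|_{TV}$. This shows the map lands in the dual and is linear and bounded.

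For the reverse inequality (hence the isometry and injectivity) I would use the Jordan decomposition $\nu = \nu^+ - \nu^-$ together with a Hahn partition $\sU = P \sqcup N$, so that $\|\nu\|_{TV} = \nu^+(\sU) + \nu^-(\sU)$. The goal is to produce, for each $\eps > 0$, a test function $g \in C_c(\sU)$ with $\|g\| \le 1$ and $I_{\nu}(g) \ge \|\nu\|_{TV} - \eps$, by approximating the discontinuous function $1_P - 1_N$. Here I would use that every finite Borel measure on a Polish space is regular (inner regular by compact sets, outer by open sets): choose compact sets $K_+ \subset P$ and $K_- \subset N$ carrying almost all of the mass of $\nu^+$ and $\nu^-$ respectively, together with disjoint open neighborhoods $V_+ \supset K_+$ and $V_- \supset K_-$, and then invoke local compactness and Urysohn's lemma to build a continuous, compactly supported $g$ with $g = 1$ on $K_+$, $g = -1$ on $K_-$, and $|g| \le 1$. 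Integrating $g$ against $\nu$ shows $I_{\nu}(g)$ is within $\eps$ of $\|\nu\|_{TV}$, which forces $\|I_{\nu}\| = \|\nu\|_{TV}$ and in particular $I_{\nu} = 0 \Rightarrow \nu = 0$.

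Surjectivity is the main obstacle and is the substance of the Riesz--Markov--Kakutani representation theorem. Given $L \in C_0(\sU)^*$, I would first reduce to a positive functional by the lattice decomposition: for $f \ge 0$ set $L^+(f) = \sup\{L(h): 0 \le h \le f, \ h \in C_c(\sU)\}$, then verify that $L^+$ extends to a positive linear functional on $C_c(\sU)$ and that $L^- = L^+ - L$ is also positive. To each positive functional I would attach a finite Radon measure through the classical outer-measure construction: define $\mu(V) = \sup\{L^+(f): f \prec V\}$ on open sets $V$ (where $f \prec V$ means $0 \le f \le 1$ with $\supp(f) \subset V$), pass to $\mu^*(E) = \inf\{\mu(V): E \subset V \text{ open}\}$, and show via Urysohn's lemma that every Borel set is Carath\'eodory measurable, that the restriction of $\mu^*$ to the Borel $\sigma$-algebra is a finite measure (finiteness using $\|L\| < \infty$), and that $L^+(f) = \int_{\sU} f \, d\mu^+$ on $C_c(\sU)$. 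Setting $\nu = \mu^+ - \mu^-$ and extending from $C_c(\sU)$ to all of $C_0(\sU)$ by density (since $C_c(\sU)$ is sup-norm dense in $C_0(\sU)$ and both $L$ and $I_{\nu}$ are continuous) yields $L = I_{\nu}$.

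Combining (a)--(c) gives the isometric isomorphism $C_0(\sU)^* \cong \M(\sU)$, and the isometry from step (b) identifies the induced dual norm with $\|\cdot\|_{TV}$. I expect the delicate points to be the Carath\'eodory-measurability and regularity verifications in the measure construction, and the well-definedness and positivity of the lattice decomposition $L = L^+ - L^-$; both are standard but technical, and the separability and local compactness of $\sU$ are used precisely to supply the Urysohn functions and the automatic regularity of the resulting finite Borel measures.
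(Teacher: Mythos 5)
Your proposal is correct, and it reconstructs the standard proof of the Riesz--Markov--Kakutani representation theorem, which the paper itself does not prove but simply cites as \cite[Theorem 7.17]{Fol99}; your three-step argument (norm bound, isometry via approximating the Hahn decomposition by Urysohn functions, surjectivity via the positive-functional lattice decomposition and the outer-measure construction on open sets, then density of $C_c(\sU)$ in $C_0(\sU)$) is essentially the same route taken in that reference. The only cosmetic difference is that Folland establishes the reverse norm inequality from the polar decomposition $d\nu = h\,d|\nu|$ together with Lusin's theorem, whereas you use the Hahn decomposition with inner/outer regularity of finite Borel measures on a locally compact Polish space, but these two devices are interchangeable here.
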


Let us set $\sY = C_0(\sU)$, and so, $\sY^* = \M(\sU)$. Then, we can define the following Banach spaces $L_1\bigl(\mu,C_0(\sU)\bigr)$ and $\rL_{\infty}\bigl(\mu,\M(\sU)\bigr)$ with the following norms, respectively,
\begin{align}
\|f\|_1 &= \int_{\sX} \|f(x)\| \, \mu(dx), \label{eq3} \\
\|\gamma\|_{\infty} &= \ess \sup_{x \in \sX} \|\gamma(x)\|_{TV}. \label{eq4}
\end{align}
Since $C_0(\sU)$ is separable, $L_1\bigl(\mu,C_0(\sU)\bigr)$ is also separable. By Theorem~\ref{mainduality}, the topological dual of $L_1\bigl(\mu,C_0(\sU)\bigr)$ is $\rL_{\infty}\bigl(\mu,\M(\sU)\bigr)$; that is
\begin{align}
L_1\bigl(\mu,C_0(\sU)\bigr)^* = \rL_{\infty}\bigl(\mu,\M(\sU)\bigr). \nonumber
\end{align}
For any $f \in L_1\bigl(\mu,C_0(\sU)\bigr)$ and $\gamma \in \rL_{\infty}\bigl(\mu,\M(\sU)\bigr)$, the duality pairing is given by
\begin{align}
\langle\langle \gamma, f \rangle\rangle &= \int_{\sX} \langle \gamma(x), f(x) \rangle \, \mu(dx) \nonumber \\
&= \int_{\sX} \int_{\sU} f(x)(u) \, \gamma(x)(du) \, \mu(dx). \label{eq5}
\end{align}
Hence, we can equip $\rL_{\infty}\bigl(\mu,\M(\sU)\bigr)$ with $w^*$-topology induced by $L_1\bigl(\mu,C_0(\sU)\bigr)$. Under this topology, $\gamma_{\lambda} \rightharpoonup^* \gamma$ in $\rL_{\infty}\bigl(\mu,\M(\sU)\bigr)$, if
\begin{align}
&\int_{\sX} \int_{\sU} f(x)(u)\, \gamma_{\lambda}(x)(du)\, \mu(dx) \rightarrow \int_{\sX} \int_{\sU} f(x)(u)\, \gamma(x)(du)\, \mu(dx), \nonumber
\end{align}
for all $f \in L_{1}\bigl(\mu,C_0(\sU)\bigr)$.

\begin{definition}\label{stoc-ker}
A mapping $\gamma: \sX \rightarrow \M(\sU)$ is called $\mu$-\emph{stochastic kernel} from $\sX$ to $\sU$ if, for all $D \in \U$, the mapping $\sX \ni x \mapsto \gamma(x)(D) \in \R$ is $\X / \B(\R)$-measurable and $\gamma(x) \in \P(\sU)$ $\mu$-a.e.. Let $\P_{\mu}(\sU \,|\, \sX)$ denote the set of all equivalence classes of $\mu$-stochastic kernels from $\sX$ to $\sU$. If $\gamma(x) \in \P(\sU)$ for all $x \in \sX$, it is called stochastic kernel without referring to $\mu$.
\end{definition}

The following result is the key to introduce a topology for the set of team decision rules. 

\begin{lemma}\label{kernel}
We have $\rL_{\infty}\bigl(\mu,\P(\sU)\bigr)=\P_{\mu}(\sU \,|\, \sX)$.
\end{lemma}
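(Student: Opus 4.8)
The plan is to prove the set equality $\rL_{\infty}\bigl(\mu,\P(\sU)\bigr)=\P_{\mu}(\sU \,|\, \sX)$ by establishing both inclusions, where the content lies in reconciling two distinct measurability notions: membership in $\rL_{\infty}\bigl(\mu,\P(\sU)\bigr)$ requires $w^*$-measurability (that $x \mapsto \langle \gamma(x),g\rangle = \int_{\sU} g\,d\gamma(x)$ be measurable for every $g \in C_0(\sU)$) together with the essential boundedness and the $\mu$-a.e.\ condition $\gamma(x)\in\P(\sU)$, whereas membership in $\P_{\mu}(\sU\,|\,\sX)$ requires that $x \mapsto \gamma(x)(D)$ be measurable for every Borel set $D \in \U$. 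I would first observe that the essential-boundedness condition is automatic on either side: if $\gamma(x)\in\P(\sU)$ $\mu$-a.e., then $\|\gamma(x)\|_{TV}=\gamma(x)(\sU)=1$ $\mu$-a.e., so $\|\gamma\|_{\infty}=1<\infty$, placing $\gamma$ in the unit ball of $\rL_{\infty}\bigl(\mu,\M(\sU)\bigr)$. Thus the whole question reduces to showing that, for maps $\gamma$ taking values in $\P(\sU)$ $\mu$-a.e., $w^*$-measurability is equivalent to Borel-set-wise measurability.

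For the inclusion $\P_{\mu}(\sU\,|\,\sX)\subseteq\rL_{\infty}\bigl(\mu,\P(\sU)\bigr)$, I would start from the set-measurability hypothesis and upgrade it to $w^*$-measurability. Fix $g \in C_0(\sU)$; since $x \mapsto \gamma(x)(D)$ is measurable for every $D$, a standard approximation argument shows $x \mapsto \int_{\sU} g\,d\gamma(x)$ is measurable: this holds for indicators by hypothesis, extends to simple functions by linearity, to nonnegative bounded measurable functions by monotone convergence, and to all of $C_0(\sU)$ by splitting into positive and negative parts. Hence $x\mapsto\langle\gamma(x),g\rangle$ is measurable for every $g\in C_0(\sU)=\sY$, which is exactly $w^*$-measurability in the sense of the definition preceding Theorem~\ref{mainduality}. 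Together with the essential-boundedness observation above, this yields $\gamma\in\rL_{\infty}\bigl(\mu,\P(\sU)\bigr)$.

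The reverse inclusion $\rL_{\infty}\bigl(\mu,\P(\sU)\bigr)\subseteq\P_{\mu}(\sU\,|\,\sX)$ is where I expect the main obstacle, since here one must pass from measurability tested only against the ``small'' class $C_0(\sU)$ to measurability of $x\mapsto\gamma(x)(D)$ for arbitrary Borel $D$. The strategy is a monotone-class / $\pi$-$\lambda$ argument run in reverse. Given $w^*$-measurability, $x\mapsto\int_{\sU}g\,d\gamma(x)$ is measurable for all $g\in C_0(\sU)$; because $\sU$ is locally compact Polish, I would approximate the indicator of a closed (or open) set by a monotone sequence in $C_0(\sU)$ (using Urysohn-type functions, exploiting local compactness and $\sigma$-compactness, since a locally compact Polish space is $\sigma$-compact) and invoke the fact that $\gamma(x)\in\P(\sU)$ is a finite measure so that $\int g_n\,d\gamma(x)\to\gamma(x)(F)$ pointwise in $x$; a pointwise limit of measurable functions is measurable, giving measurability of $x\mapsto\gamma(x)(F)$ for closed $F$. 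I would then let $\D=\{D\in\U : x\mapsto\gamma(x)(D)\text{ is measurable}\}$ and check that $\D$ is a $\lambda$-system (closed under complements, since $\gamma(x)(\sU)=1$ a.e., and under countable disjoint — indeed monotone — unions, by countable additivity of the probability measures $\gamma(x)$ and the measurability of pointwise limits) containing the $\pi$-system of closed sets; Dynkin's theorem then forces $\D=\U$. This establishes $\gamma\in\P_{\mu}(\sU\,|\,\sX)$ and completes the equality. The delicate point to handle carefully is the justification that the essential-boundedness / a.e.-probability normalization lets complementation stay within $\D$ and that the Urysohn approximation can be chosen monotone with the correct one-sided limit, so that monotone convergence applies cleanly on each fiber.
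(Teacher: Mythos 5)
Your overall architecture is the same as the paper's: the inclusion $\P_{\mu}(\sU\,|\,\sX)\subseteq\rL_{\infty}\bigl(\mu,\P(\sU)\bigr)$ is the easy direction (the paper dismisses it as ``straightforward''; your indicator $\to$ simple $\to$ bounded measurable $\to$ $C_0$ chain is exactly what is meant, and your observation that $\|\gamma\|_{\infty}=1$ automatically is correct), and the substance lies in passing from $w^*$-measurability tested against $C_0(\sU)$ to set-wise measurability, which both you and the paper handle by approximating indicators by continuous functions and then invoking a generating-class argument (you via Dynkin's $\pi$--$\lambda$ theorem, the paper via \cite[Proposition 7.25]{BeSh78}, which plays the same role).

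However, one step as you state it is false: for a closed, non-compact set $F\subset\sU$ there is \emph{no} monotone sequence in $C_0(\sU)$ converging pointwise to $1_F$. An increasing limit of continuous functions is lower semicontinuous, and $1_F$ is lower semicontinuous only when $F$ is open; while a decreasing sequence $g_n\geq 1_F$ with $g_n\in C_0(\sU)$ would force $F\subseteq\{g_n\geq 1\}$, a compact set, contradicting non-compactness of $F$. This obstruction is precisely why the paper inserts an intermediate step: it first extends measurability of $x\mapsto\langle\gamma(x),g\rangle$ from $g\in C_0(\sU)$ to all bounded continuous $g$ (pointwise approximation with $\|g_n\|\leq\|g\|$ and dominated convergence on each fiber, $\gamma(x)$ being a finite measure), and only then approximates $1_F$ by the bounded continuous --- but not $C_0$ --- functions $h_n(u)=\max\bigl(1-n\,d_{\sU}(u,F),0\bigr)$. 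Your proof is repaired with no extra machinery by committing to your parenthetical ``(or open)'': for open $O$, local compactness and $\sigma$-compactness give $g_n\in C_c(\sU)\subseteq C_0(\sU)$ with $g_n\uparrow 1_O$ pointwise, monotone (or dominated) convergence on each fiber gives measurability of $x\mapsto\gamma(x)(O)$, and your $\lambda$-system $\D$ must then be checked against the $\pi$-system of \emph{open} sets (not closed sets, as you wrote), which still generates $\U$, so Dynkin's theorem finishes the proof. With that correction your argument is complete and is essentially the paper's, the open-set route letting you bypass the intermediate extension to bounded continuous functions.
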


\begin{proof}
Let $\gamma \in \rL_{\infty}\bigl(\mu,\P(\sU)\bigr)$. Note first that the mapping $\sX \ni x \mapsto \langle \gamma(x),g \rangle \in \R$ is $\X / \B(\R)$-measurable for any continuous and bounded $g$ on $\sU$, because any such $g$ can be approximated pointwise by $\{g_n\}_{n\geq1} \subset C_0(\sU)$ satisfying $ \|g_n\| \leq \|g\|$ for all $n$. Moreover, for any closed set $F \subset \sU$, one can approximate pointwise the indicator function $1_F$ by continuous and bounded functions $h_n(u) = \max\bigl(1 - n d_{\sU}(u,F), 0 \bigr)$, where $d_{\sU}$ is the metric on $\sU$ and $d_{\sU}(u,F) = \inf_{y \in F} d_{\sU}(u,y)$. This implies that the mapping $\sX \ni x \mapsto \gamma(x)(F) \in \R$ is $\X / \B(\R)$-measurable for all closed set $F$ in $\sU$. Then the result follows by \cite[Proposition 7.25]{BeSh78}.

The reverse implication is straightforward.
\end{proof}

Let $\P_{_{\leq1}}(\sU)$ denote the set of sub-probability measures in $\M(\sU)$. Then, since $\rL_{\infty}\bigl(\mu,\P_{_{\leq1}}(\sU)\bigr)$ is proved to be closed in $\rL_{\infty}\bigl(\mu,\M(\sU)\bigr)$ with respect to $w^*$-topology and is a subset of the unit ball, it is (sequentially) compact and metrizable with respect to $w^*$-topology by Banach-Alaoglu Theorem.  
Note that $\rL_{\infty}\bigl(\mu,\P(\sU)\bigr)$ is a subset of $\rL_{\infty}\bigl(\mu,\P_{_{\leq1}}(\sU)\bigr)$, and so, we can endow $\rL_{\infty}\bigl(\mu,\P(\sU)\bigr)$ with relative $w^*$-topology. Note that $\rL_{\infty}\bigl(\mu,\P(\sU)\bigr)$ is not closed with respect to $w^*$-topology unless $\sU$ is compact. Indeed, let $\sX = \sU = \R$. Define $\gamma_n(x)(\,\cdot\,) = \delta_n(\,\cdot\,)$ and $\gamma(x)(\,\cdot\,) = 0(\,\cdot\,)$, where $\delta_a$ denotes the point mass at $a$ and $0(\,\cdot\,)$ denotes the degenerate measure on $\sU$; that is, $0(D)=0$ for all $D \in \U$. Let $f \in L_1\bigl(\mu,C_0(\sU)\bigr)$. Then we have
\begin{align}
\lim_{n\rightarrow\infty} \int_{\sX} \langle \gamma_n(x), f(x) \rangle \, \mu(dx) &= \lim_{n\rightarrow\infty} \int_{\sX} f(x)(n) \, \mu(dx) \nonumber \\
&= \int_{\sX} \lim_{n\rightarrow\infty} f(x)(n) \, \mu(dx) \text{ (as $\|f(x)\|$ is $\mu$-integrable)} \nonumber \\
&= 0 \text{ (as $f(y) \in C_0(\sU)$)}. \nonumber
\end{align}
Hence, $\gamma_n \rightharpoonup^* \gamma$. But, $\gamma \notin \rL_{\infty}\bigl(\mu,\P(\sU)\bigr)$, and so, $\rL_{\infty}\bigl(\mu,\P(\sU)\bigr)$ is not closed in $\rL_{\infty}\bigl(\mu,\P_{_{\leq1}}(\sU)\bigr)$. Therefore, $\rL_{\infty}\bigl(\mu,\P(\sU)\bigr)$ is relatively (sequentially) compact with respect to $w^*$-topology. 

\subsection{The $w^*$-topology and Young narrow topology}

In this section, we compare $w^*$-topology on $\mu$-stochastic kernels $\rL_{\infty}\bigl(\mu,\P(\sU)\bigr)$ with Young narrow topology \cite{Val94}. We also refer the reader to the excellent recent paper \cite{Yuk23} for comparison of different topologies, including $w^*$-topology and Young narrow topology, on control policies, which are used to prove the continuous dependence of invariant measures on control policy.

A first relevant result toward this direction is the identification of $L_1\bigl(\mu,C_0(\sU)\bigr)$ as a subset of the set of  Caratheodory functions. 

\begin{definition}
A measurable function $h:\sX\times\sU\rightarrow\R$ is called a Caratheodory function if it is continuous in $u$ for all $x \in \sX$. Let $\Car(\sX\times\sU)$ denote the set of all Caratheodory functions. Let $\Car_0(\sX\times\sU)$ denote the set of all $h \in \Car(\sX\times\sU)$ such that $h(x,\,\cdot\,)$ vanishes at infinity and let $\Car_b(\sX\times\sU)$ denote the set of all bounded Caratheodory functions. 
\end{definition} 

The following result states that the set of strongly measurable functions is a subset of the set of Caratheodory functions.  

\begin{lemma}\label{Car}
Let $f:\sX \rightarrow C_0(\sU)$ be a strongly measurable function and define $h_f:\sX\times\sU\rightarrow \R$ as $h_f(x,u) \triangleq f(x)(u)$. Then, $h_f \in \Car_0(\sX\times\sU)$. Conversely, let $h \in \Car_0(\sX\times\sU)$ and define $f_h:\sX\rightarrow C_0(\sU)$ as $f_h(x) = h(x,\,\cdot\,)$. Then, $f_h$ is strongly measurable. 
\end{lemma}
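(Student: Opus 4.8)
The plan is to prove the two directions separately, since the statement is an equivalence between strong measurability of $f:\sX \to C_0(\sU)$ and membership of $h_f$ in $\Car_0(\sX\times\sU)$. I would begin with the forward direction. Let $f$ be strongly measurable, so there is a sequence $\{f_n\}$ of simple functions with $\|f_n(x)-f(x)\| \to 0$ $\mu$-a.e. Since each $f_n$ takes finitely many values in $C_0(\sU)$ on measurable sets, the associated $h_{f_n}(x,u) = f_n(x)(u)$ is jointly measurable (a finite sum of products of an $\X$-measurable indicator in $x$ and a continuous function in $u$) and lies in $\Car_0$. The uniform norm convergence $\|f_n(x)-f(x)\|\to 0$ means $\sup_u |h_{f_n}(x,u)-h_f(x,u)| \to 0$ for $\mu$-a.e.\ $x$, so $h_{f_n} \to h_f$ pointwise a.e., whence $h_f$ is jointly measurable. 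Continuity of $h_f(x,\cdot)$ and its vanishing at infinity are immediate from $f(x) \in C_0(\sU)$, so $h_f \in \Car_0(\sX\times\sU)$.

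For the converse, let $h \in \Car_0(\sX\times\sU)$ and set $f_h(x) = h(x,\cdot)$, which by hypothesis is a well-defined map into $C_0(\sU)$. The goal is to produce simple functions approximating $f_h$ in the $C_0(\sU)$-norm $\mu$-a.e. The natural route is to first verify that $f_h$ is (weakly, hence by separability of $C_0(\sU)$, strongly) measurable: for each fixed $g \in \M(\sU) = C_0(\sU)^*$ I would check that $x \mapsto \langle g, f_h(x)\rangle = \int_{\sU} h(x,u)\, g(du)$ is $\X$-measurable, which follows from joint measurability of $h$ and Fubini-type arguments. Since $C_0(\sU)$ is a separable Banach space, the Pettis measurability theorem then upgrades weak measurability to strong measurability, giving the required approximating sequence of simple functions and completing the converse.

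The main obstacle I expect is the converse direction, specifically establishing strong (Bochner) measurability rather than mere weak measurability of $f_h$. One must be careful that $h \in \Car_0$ guarantees $f_h(x) \in C_0(\sU)$ pointwise and that the map into the Banach space is genuinely separably valued and weakly measurable before invoking Pettis. An alternative, more hands-on approach avoids Pettis: exploit local compactness and $\sigma$-compactness of $\sU$ to write $\sU = \bigcup_k K_k$ with $K_k$ compact, use the separability of $C_0(\sU)$ to cover its range by countably many balls, and build simple approximants directly from a measurable selection into these balls. Either way, the delicate point is controlling the approximation uniformly in $u$ (the $C_0$-norm is the sup norm) using only continuity in $u$ and measurability in $x$; the vanishing-at-infinity condition encoded in $\Car_0$ is precisely what makes the uniform control possible and should be used explicitly when truncating outside the compacts $K_k$.
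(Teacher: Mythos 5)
Your proposal is correct and takes essentially the same route as the paper: the forward direction by reducing to simple functions and passing to $\mu$-a.e.\ pointwise limits, and the converse by verifying weak measurability of $x \mapsto \langle \gamma, f_h(x)\rangle$ for every $\gamma \in \M(\sU) = C_0(\sU)^*$ and then upgrading to strong measurability via the Pettis measurability theorem using separability of $C_0(\sU)$, which is exactly the content of the result the paper cites from \cite{PaWi18}.
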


\begin{proof}
It is straightforward to prove that the forward statement is true for any simple function $f:\sX\rightarrow C_0(\sU)$.  Since any strongly measurable function can be approximated via simple functions by definition, the forward statement is also true for any strongly measurable function. 

Conversely, let $h\in\Car_0(\sX\times\sU)$ and define $f_h:\sX\rightarrow C_0(\sU)$ as $f_h(x) = h(x,\,\cdot\,)$. Then, for any $\gamma \in \M(\sU)$, the function $\sX \ni x \mapsto \int_{\sU} f_h(x)(u) \, \gamma(du) \in \R$ is measurable. Hence, $f_h$ is strongly measurable by \cite[Theorem 4.2.4-(c)]{PaWi18}.  
\end{proof}

Therefore, any element $f$ of some equivalence class in  $L_1\bigl(\mu,C_0(\sU)\bigr)$ is an element of $\Car_0(\sX\times\sU)$ satisfying 
\begin{align}\label{cccv}
\int_{\sX} \ess \sup_{u \in \sU} |f(x)(u)| \, \mu(dx) < \infty.
\end{align}
The converse is also true; that is, if $h \in \Car_0(\sX\times\sU)$ satisfying (\ref{cccv}), then $h$ is an  element of some equivalence class in $L_1\bigl(\mu,C_0(\sU)\bigr)$.

Let us now define \emph{Young narrow topology} on the set of $\mu$-sub-stochastic kernels $\rL_{\infty}\bigl(\mu,\P_{_{\leq1}}(\sU)\bigr)$ \cite[Definition 4.7.11]{PaWi18}. Young narrow topology on $\rL_{\infty}\bigl(\mu,\P_{_{\leq1}}(\sU)\bigr)$ is the smallest topology for which the mapping
\begin{align}
\rL_{\infty}\bigl(\mu,\P_{_{\leq1}}(\sU)\bigr) \ni \gamma \mapsto \int_{\sX} \int_{\sU} f(x,u) \, \gamma(x)(du) \, \mu(dx) \in \R, \nonumber
\end{align}
is continuous for all equivalence classes $f$ of bounded Caratheodory functions. In this case, the equivalence relation is the same as the equivalence relation in $L_1(\mu,C_0(\sU))$; that is, $f \sim \tilde{f}$ in $\Car_b(\sX\times\sU)$ if $f(x,\,\cdot\,)=\tilde{f}(x,\,\cdot\,)$ $\mu$-a.e.. Note that in $w^*$-topology, we consider equivalence classes in $L_1(\mu,C_0(\sU))$, where every function in these equivalence classes is an element of $\Car_0(\sX\times\sU)$ and satisfying (\ref{cccv}). Hence, the relation between $w^*$-topology and Young narrow topology is similar to the vague topology and weak topology on measures.

In Young narrow topology, one can prove that $\rL_{\infty}\bigl(\mu,\P(\sU)\bigr)$ is a closed (as opposed to $w^*$-topology) and metrizable subset of $\rL_{\infty}\bigl(\mu,\P_{_{\leq1}}(\sU)\bigr)$ \cite[Proposition 4.7.14]{PaWi18}. However, $\rL_{\infty}\bigl(\mu,\P(\sU)\bigr)$  is not relatively (sequentially) compact in this topology unless $\sU$ is compact. Moreover, we also do not have nice duality structure as we have in $w^*$-topology. These last two properties are extremely important for proving the existence of optimal team decision rule. These are indeed the motivations for working with $w^*$-topology instead of Young narrow topology on $\rL_{\infty}\bigl(\mu,\P(\sU)\bigr)$ in this paper. 

It is interesting to note that by \cite[Theorem 3]{Val94}, $w^*$-topology and Young narrow topology are topologically equivalent on $\rL_{\infty}\bigl(\mu,\P(\sU)\bigr)$. However, they are induced by two different topologies         
on $\rL_{\infty}\bigl(\mu,\P_{_{\leq1}}(\sU)\bigr)$. This is indeed the reason for the above distinctions between $w^*$ and Young narrow topology on $\rL_{\infty}\bigl(\mu,\P(\sU)\bigr)$ as completeness is not a topological property.

\section{Centralized Reduction of Static Team Problem}\label{staticteams}

In this section, using the $w^*$-topology on stochastic kernels, we establish the centralized reduction of static team problem via common information approach. Let us set
\begin{align}
\sY = \prod_{i=1}^N \sY_i \text{ } \text{ and } \text{ } \sU = \prod_{i=1}^N \sU_i. \nonumber
\end{align}
Let $\by$ and $\bu$ denote the elements of these sets, respectively.

%
%

Recall the cost function $c: \sX \times \sX_0 \times \prod_i \sY_i \times \prod_i \sU_i \to [0,\infty)$. We define the function $\tilde{c}:\sX \times \sX_0 \times \prod_i \sY_i \times \prod_i \M(\sU_i)  \rightarrow \R_+$ as
\begin{align}
&\tilde{c}(x,x_0,\by,\nu_1,\nu_2,\ldots,\nu_N) \triangleq \int_{\sU} c(x,x_0,\by,u_1,u_2,\ldots,u_N) \, \prod_{i=1}^N \nu_i(du_i). \nonumber
\end{align}
Then, the static team problem is equivalent to the following optimization problem:
\begin{align}
\mathbf{(S)} \inf_{\substack{\gamma_i \in \Gamma_i \\ i=1,\ldots,N}} \cE \biggl[ \tilde{c}\biggl(x,x_0,\by,\gamma_1(x_0,y_1),\ldots,\gamma_N(x_0,y_N)\biggr) \biggr]. \nonumber
\end{align}

We now transform $\mathbf{(S)}$ into a centralized stochastic control problem via common information approach.

For each $i=1,\ldots,N$, let us set
\begin{align}
\Lambda_i \triangleq \rL_{\infty}\bigl(\mu_i,\P(\sU_i)\bigr). \nonumber
\end{align}
We endow $\Lambda_i$ with $w^*$-topology induced by the functions $L_1\bigl(\mu_i,C_0(\sU_i)\bigr)$. Under this $w^*$-topology, $\Lambda_i$ is metrizable and relatively (sequentially) compact. Therefore, it is also separable. Borel $\sigma$-algebra generated by this topology is denoted by $\B\bigl(\Lambda_i\bigr)$.
Let
\begin{align}
 \Lambda \triangleq \Lambda_1 \times \Lambda_2 \times \ldots \times \Lambda_N. \nonumber
\end{align}
We endow $\Lambda$ with the product topology, and so,  Borel $\sigma$-algebra $\B(\Lambda)$ is
\begin{align}
\B(\Lambda) = \B(\Lambda_1) \otimes \B(\Lambda_2) \otimes \ldots \otimes \B(\Lambda_N). \nonumber
\end{align}
Using these definitions, we introduce the following class of functions:
\begin{align}
\A \triangleq \biggl\{ \lambda: \sX_0 \rightarrow \Lambda; \text{ }  \lambda \text{ is } \X_0/\B(\Lambda)-\text{measurable} \biggr\}. \nonumber
\end{align}
In centralized reduction, $\Lambda$ will be the set of possible 
actions for the centralized agent and $\A$ is the set of corresponding policies.

Note that given any $\lambda = (\lambda_1,\lambda_2,\ldots,\lambda_N) \in \A$, one can view $\lambda_i$ as a $\mu_0\otimes\mu_i$-stochastic kernel from $\sX_0 \times \sY_i$ to $\sU_i$, since $\lambda_i(x_0) \in \Lambda_i \triangleq \rL_{\infty}\bigl(\mu_i,\P(\sU_i)\bigr)$ for any $x_0 \in \sX_0$.
Indeed, for any $\lambda_i$, let us define
$$
\gamma_{\lambda_i}(x_0,y_i) \triangleq \lambda_i(x_0)(y_i). 
$$
Analogous to Lemma~\ref{kernel}, we can prove the following result.

\begin{lemma}\label{kernel2}
Given any $\lambda = (\lambda_1,\lambda_2,\ldots,\lambda_N) \in \A$, $\gamma_{\lambda_i}$ is a $\mu_0\otimes\mu_i$-stochastic kernel from $\sX_0 \times \sY_i$ to $\sU_i$ for any $i=1,\ldots,N$. Conversely, if $\hat{\gamma}_i: \sX_0 \times \sY_i \rightarrow \M(\sU_i)$ is a $\mu_0\otimes\mu_i$-stochastic kernel for each $i=1,\ldots,N$, then there exist stochastic kernels $\gamma_i$ such that $\gamma_i = \hat{\gamma}_i$ $\mu_0 \otimes \mu_i$-a.e., $\gamma_i(x_0,\cdot) \in \Lambda_i$ for all $x_0 \in \sX_0$ and the following map
\begin{align}
\sX_0 \ni x_0 \mapsto \bigl(\gamma_1(x_0,\cdot),\gamma_2(x_0,\cdot),\ldots,\gamma_N(x_0,\cdot)\bigr) \in \Lambda \label{aux1}
\end{align}
is in $\A$.
\end{lemma}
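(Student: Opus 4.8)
The plan is to treat both implications as two faces of a single \emph{uncurrying} (Fubini-type) identification between kernels on the product space $\sX_0\times\sY_i$ and $\Lambda_i$-valued maps on $\sX_0$, and then to read off the claim from Lemma~\ref{kernel}. First I would regard $\gamma_{\lambda_i}$ as a candidate element of $\rL_{\infty}(\mu_0\otimes\mu_i,\M(\sU_i))$, i.e. I would run the machinery of Section~\ref{particular} once more, but with the base probability space $(\sX_0\times\sY_i,\,\X_0\otimes\Y_i,\,\mu_0\otimes\mu_i)$ in place of $(\sX,\X,\mu)$ and the same target $\sU_i$. Under this reading Lemma~\ref{kernel} identifies $\rL_{\infty}(\mu_0\otimes\mu_i,\P(\sU_i))$ with the set $\P_{\mu_0\otimes\mu_i}(\sU_i\,|\,\sX_0\times\sY_i)$ of $\mu_0\otimes\mu_i$-stochastic kernels, so it suffices to produce an isomorphism carrying $\A_i$ (the $i$-th coordinate of $\A$, so that $\A=\prod_i\A_i$) onto $\rL_{\infty}(\mu_0\otimes\mu_i,\P(\sU_i))$.

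To build this map I would invoke the Bochner--Fubini isomorphism $L_1(\mu_0\otimes\mu_i,C_0(\sU_i))\cong L_1\bigl(\mu_0,\,L_1(\mu_i,C_0(\sU_i))\bigr)$, under which $F\mapsto\bigl(x_0\mapsto F(x_0,\cdot)\bigr)$; both sides are separable Banach spaces. Taking topological duals and applying Theorem~\ref{mainduality} twice --- once with target $\M(\sU_i)$ on the product space, and once with the intermediate target $\rL_{\infty}(\mu_i,\M(\sU_i))=L_1(\mu_i,C_0(\sU_i))^*$ --- yields an isometric isomorphism $\Xi:\rL_{\infty}(\mu_0\otimes\mu_i,\M(\sU_i))\to\rL_{\infty}\bigl(\mu_0,\rL_{\infty}(\mu_i,\M(\sU_i))\bigr)$. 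A Tonelli computation on the defining double integral in (\ref{eq5}) shows that $\Xi$ is exactly uncurrying, $\Xi(\gamma)(x_0)=\gamma(x_0,\cdot)$, so that $\Xi^{-1}$ sends the class of $x_0\mapsto\lambda_i(x_0)$ to $\gamma_{\lambda_i}$.

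It remains to match the two sides. Because $\Lambda_i$ is separable and its $w^*$-topology is the initial topology of the countably generated family of evaluation functionals $\gamma\mapsto\langle\langle\gamma,f\rangle\rangle$, $f\in L_1(\mu_i,C_0(\sU_i))$, a map $\lambda_i:\sX_0\to\Lambda_i$ is $\X_0/\B(\Lambda_i)$-measurable iff it is $w^*$-measurable in the sense of Section~\ref{sec0}. Hence $\Xi^{-1}$ carries each $\lambda_i\in\A_i$ (being $\P(\sU_i)$-valued for every $x_0$) into $\rL_{\infty}(\mu_0\otimes\mu_i,\P(\sU_i))=\P_{\mu_0\otimes\mu_i}(\sU_i\,|\,\sX_0\times\sY_i)$, which settles the forward direction. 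For the converse I would start from a stochastic kernel $\hat\gamma_i$, read it as an element of $\rL_{\infty}(\mu_0\otimes\mu_i,\P(\sU_i))$ via Lemma~\ref{kernel}, set $\lambda_i:=\Xi(\hat\gamma_i)$ --- a $w^*$-measurable, hence Borel, map with $\lambda_i(x_0)\in\Lambda_i$ for $\mu_0$-a.e. $x_0$ --- and then redefine $\lambda_i$ on the exceptional $\mu_0$-null set to equal a fixed kernel (say $y_i\mapsto\delta_{u^*}$). The resulting $\gamma_i(x_0,\cdot):=\lambda_i(x_0)$ lies in $\Lambda_i$ for \emph{every} $x_0$, agrees with $\hat\gamma_i$ up to $\mu_0\otimes\mu_i$-null sets, and is Borel in $x_0$; doing this coordinatewise and using $\B(\Lambda)=\bigotimes_i\B(\Lambda_i)$ places the tuple (\ref{aux1}) in $\A$.

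The main obstacle is the joint measurability concealed in $\Xi$, i.e. making sense of the pointwise value $\lambda_i(x_0)(y_i)$. Since $\lambda_i(x_0)$ is only a $\mu_i$-equivalence class and $w^*$-convergence in $\Lambda_i$ controls only the $y_i$-integrated functionals, one cannot evaluate it at a fixed $y_i$, and a naive simple-function approximation of $\lambda_i$ converges only after integrating against $f\in L_1(\mu_i,C_0(\sU_i))$, not pointwise in $y_i$. The Bochner--Fubini identification is precisely what upgrades this integrated information to a genuinely $\X_0\otimes\Y_i$-measurable representative; once that representative is in hand, the passage from $C_0$-test functions to indicators of closed sets is routine and proceeds verbatim as in Lemma~\ref{kernel} via \cite[Proposition 7.25]{BeSh78}.
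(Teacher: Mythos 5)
Your argument is sound in outline, but it reaches the crucial joint-measurability step by a genuinely different mechanism than the paper, so it is worth spelling out the difference. The paper's proof of the forward direction forms, for each $x_0$, the product measure $\sM_i(x_0)(dy_i,du_i)=\lambda_i(x_0)(y_i)(du_i)\otimes\mu_i(dy_i)$ (which does not depend on the representative of $\lambda_i(x_0)$), verifies via \cite[Proposition 7.25]{BeSh78} that $\sM_i$ is a stochastic kernel from $\sX_0$ to $\sY_i\times\sU_i$, and then invokes the disintegration result \cite[Proposition 7.27]{BeSh78} to extract a jointly measurable $q_i(x_0,y_i)(du_i)$ with $\sM_i(x_0)=q_i(x_0,\cdot)\otimes\mu_i$; the converse is done by hand with simple-function approximation. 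You instead derive both implications from a single uncurrying isomorphism $\Xi$, built from the Bochner--Fubini identification $L_1(\mu_0\otimes\mu_i,C_0(\sU_i))\cong L_1\bigl(\mu_0,L_1(\mu_i,C_0(\sU_i))\bigr)$ and two applications of Theorem~\ref{mainduality} (legitimate, since the intermediate predual $L_1(\mu_i,C_0(\sU_i))$ is a separable Banach space). The surjectivity of $\Xi$, automatic because it is the adjoint of an isometric isomorphism of preduals, is exactly what replaces disintegration: $\Xi^{-1}(\lambda_i)$ \emph{is} by definition an equivalence class of jointly $w^*$-measurable functions on the product, which is the object the forward direction must produce. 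Both proofs share the identification of $\B(\Lambda_i)$ with the evaluation $\sigma$-algebra $\G_i$ (your ``Borel $=$ $w^*$-measurable'' step), and both use Lemma~\ref{kernel}; you additionally apply it over the product base $(\sX_0\times\sY_i,\mu_0\otimes\mu_i)$, which is fine since that space is again Polish. What your route buys is a structural statement (an isometric correspondence between $\A_i$ and the $\mu_0\otimes\mu_i$-kernels) from which the two directions fall out symmetrically, with no disintegration theorem; what the paper's route buys is that every analytic ingredient is an off-the-shelf result from \cite{BeSh78}.

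Two points you gloss over do need short arguments. First, the claim that $\Xi^{-1}$ carries $\A_i$ into $\rL_{\infty}\bigl(\mu_0\otimes\mu_i,\P(\sU_i)\bigr)$: from $\Xi(\gamma)=\lambda_i$ you only get $[\gamma(x_0,\cdot)]=\lambda_i(x_0)$ in $\rL_{\infty}(\mu_i,\M(\sU_i))$ for $\mu_0$-a.e.\ $x_0$, and to upgrade this to ``$\gamma(x_0,y_i)\in\P(\sU_i)$ $\mu_0\otimes\mu_i$-a.e.'' via Fubini you must first check that $\{(x_0,y_i):\gamma(x_0,y_i)\in\P(\sU_i)\}$ is product-measurable; this follows by testing positivity against a countable dense family of nonnegative functions in $C_0(\sU_i)$ and total mass against an increasing sequence $g_n\uparrow 1$ in $C_c(\sU_i)$ (using $\sigma$-compactness of the locally compact Polish space $\sU_i$), but it is not automatic. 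Second, in the converse your $\gamma_i$ is defined only through its slices, $\gamma_i(x_0,\cdot):=\lambda_i(x_0)$, i.e.\ as a map into equivalence classes, whereas the lemma demands genuine stochastic kernels on $\sX_0\times\sY_i$ (jointly measurable, everywhere probability-valued) with $\gamma_i=\hat\gamma_i$ $\mu_0\otimes\mu_i$-a.e. The clean repair, which is what the paper does, is to modify $\hat\gamma_i$ itself on the Borel $\mu_0\otimes\mu_i$-null set where it fails to be probability-valued, so that joint measurability is inherited from $\hat\gamma_i$, and then to verify Borel measurability of the slice map $x_0\mapsto\gamma_i(x_0,\cdot)$ by the same Tonelli computation you use inside $\Xi$. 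With these two repairs your proof is complete.
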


\begin{proof}
For any $i=1,\ldots,N$, let $\G_i$ be the smallest $\sigma$-algebra on $\Lambda_i$ that makes functions of the form
\begin{align}
\Lambda_i \ni \gamma_i \mapsto \langle\langle \gamma_i,f \rangle\rangle \in \R \phantom{xxx} \biggl(f \in L_{1}\bigl(\mu_i,C_0(\sU_i)\bigr)\biggr) \label{eq8}
\end{align}
measurable. Note that $w^*$-topology on $\Lambda_i$ is the smallest topology that makes functions given in (\ref{eq8}) continuous. Since the indicator function of any closed set in $\Lambda_i$ can be approximated pointwise by continuous and bounded functions and since $\B(\Lambda_i)$ is the smallest $\sigma$-algebra that contains closed sets, we can conclude that $\B(\Lambda_i)$ is the smallest $\sigma$-algebra that makes functions given in (\ref{eq8}) measurable. Hence, $\B(\Lambda_i) = \G_i$.

We first prove the converse. Let $\hat{\gamma}_i: \sX_0 \times \sY_i \rightarrow \M(\sU_i)$ is a $\mu_0\otimes\mu_i$-stochastic kernel for each $i=1,\ldots,N$. Then, for each $i$, there exists a stochastic kernel $\gamma_i$ in the equivalence class of $\hat{\gamma}_i$ such that $\gamma_i = \hat{\gamma}_i$ $\mu_0 \otimes \mu_i$-a.e. and $\gamma_i(x_0,y_i) \in \P(\sU_i)$ for all $(x_0,y_i) \in \sX_0 \times \sY_i$. Hence, $\gamma_i(x_0,\cdot) \in \Lambda_i$ for all $x_0 \in \sX_0$ by Lemma~\ref{kernel}. Note first that the mapping
\begin{align}
\sX_0 \ni x_0 \mapsto \int_{\sY_i \times \sU_i} g(u_i) 1_{A}(y_i) \gamma_i(x_0,y_i)(du_i) \, \mu_i(dy_i) \in \R \nonumber
\end{align}
is measurable for any Borel $A \subset \sY_i$ and $g \in C_0(\sU_i)$. Since any $f \in L_{1}\bigl(\mu_i,C_0(\sU_i)\bigr)$ can be approximated by simple functions $\sum_{k=1}^m 1_{A_k}(x_i) g_k(u_i)$,
the following is also measurable
\begin{align}
\sX_0 \ni x_0 \mapsto \langle\langle \gamma_i(x_0,\cdot),f \rangle\rangle \in \R \nonumber
\end{align}
for any $f \in L_{1}\bigl(\mu_i,C_0(\sU_i)\bigr)$. But this is the same as $\X_0/\G_i$ measurability of $\gamma_i(x_0,\cdot)$. Since $\B(\Lambda_i) = \G_i$ for all $i$, the mapping in (\ref{aux1}) is $\X_0/\B(\Lambda)$-measurable. This completes the proof of converse part.

Let $\lambda = (\lambda_1,\ldots,\lambda_N) \in \A$. Then $\lambda_i$ is $\X_0/\B(\Lambda_i)$-measurable for each $i=1,\ldots,N$. Since $\B(\Lambda_i) = \G_i$, this implies that the following is also measurable
\begin{align}
\sX_0 \ni x_0 \mapsto \langle\langle \lambda_i(x_0),f \rangle\rangle \in \R \nonumber
\end{align}
for any $f \in L_{1}\bigl(\mu_i,C_0(\sU_i)\bigr)$. Define $\sM_i: \sX_0 \ni x_0 \mapsto \lambda_i(x_0)(y_i)(du_i) \otimes \mu_i(dy_i) \in \P(\sY_i \times \sU_i)$. For any $g \in C_0(\sU_i)$ and Borel $A \subset \sY_i$, the following map is measurable from $\sX_0$ to $\R$:
\begin{align}
&\int_{\sY_i}  \int_{\sU_i} 1_{A}(y_i) g(u_i) \lambda_i(x_0)(y_i)(du_i) \, \mu_i(dy_i) \triangleq \int_{\sY_i \times \sU_i} 1_{A}(y_i) g(u_i) \, \sM_i(x_0)(dy_i,du_i). \nonumber
\end{align}
This is also true if we replace $g$ with bounded and continuous function on $\sU_i$. Using this, we can also replace $g$ with an indicator function of any closed set $F$ in $\sU_i$. Since sets of the form $A\times F$, where $F \subset \sU_i$ closed, generates the product Borel $\sigma$-algebra on $\sY_i \times \sU_i$, the mapping $\sM_i$ is a stochastic kernel from $\sX_0$ to $\sY_i \times \sU_i$ \cite[Proposition 7.25]{BeSh78}. Then, by \cite[Proposition 7.27]{BeSh78}, there exists a stochastic kernel $q_i: \sX_0 \times \sY_i \rightarrow \P(\sU_i)$ such that
\begin{align}
\lambda_i(x_0)(y_i)(du_i) \otimes  \mu_i(dy_i) &\triangleq \sM_i(x_0)(dy_i,du_i) \nonumber \\
&= q_i(x_0,y_i)(du_i) \otimes  \mu_i(dy_i). \nonumber
\end{align}
Therefore, for any $x_0$, $\lambda_i(x_0)(y_i)(du_i) = q_i(x_0,y_i)(du_i)$ $\mu_i$-a.e.. Hence, $\gamma_{\lambda_i}(x_0,y_i) \triangleq \lambda_i(x_0)(y_i)$ is a $\mu_0\otimes\mu_i$-stochastic kernel. 
\end{proof}

Lemma~\ref{kernel2} states that ${\bf \Gamma}$ is equivalent to the set $\A$; that is,
$$
{\bf \Gamma} = \A. 
$$
Let us define the function $L: \sX \times \sX_0 \times \Lambda \rightarrow \R_+$ as
\begin{align}
&L(x,x_0,\lambda) \triangleq \int_{\sY} \tilde{c}(x,x_0,\by,\lambda_1(x_0)(y_1),\ldots,\lambda_N(x_0)(y_N)) \, \prod_{i=1}^N q_i(y_i,x) \, \mu_i(dy_i), \nonumber
\end{align}
where $\lambda \triangleq (\lambda_1,\ldots,\lambda_N)$.

\begin{lemma}\label{measurable}
The function $L$ is $\X \otimes \X_0 \otimes \B(\Lambda) /\B(\R)$-measurable.
\end{lemma}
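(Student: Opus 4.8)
The plan is to establish joint measurability by a functional monotone class (multiplicative system) argument, after first recording how $L$ acts on product-type costs, where the abstract $w^*$-pairing can be made explicit. Throughout, for $\lambda=(\lambda_1,\ldots,\lambda_N)\in\Lambda$ I write $\lambda_i(y_i)\in\P(\sU_i)$ for the value at $y_i$ of the $\mu_i$-stochastic kernel $\lambda_i\in\Lambda_i=\rL_{\infty}(\mu_i,\P(\sU_i))=\P_{\mu_i}(\sU_i\,|\,\sY_i)$ (Lemma~\ref{kernel}), so that
\[ L(x,x_0,\lambda)=\int_{\sY}\int_{\sU}c(x,x_0,\by,\bu)\,\prod_{i=1}^N\lambda_i(y_i)(du_i)\,\prod_{i=1}^N q_i(y_i,x)\,\mu_i(dy_i). \]
Let $\H$ be the family of bounded Borel functions $c$ on $\sX\times\sX_0\times\sY\times\sU$ for which the associated $L$ is $\X\otimes\X_0\otimes\B(\Lambda)$-measurable. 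Since $c\mapsto L$ is linear and order preserving and every inner measure is a probability measure, $\H$ is a vector space, contains the constants, and is closed under uniformly bounded increasing limits (bounded convergence gives $L_{c_n}\uparrow L_c$ pointwise). Thus it suffices, by the functional monotone class theorem, to produce a multiplicative class $\mathcal M\subset\H$ generating the Borel $\sigma$-algebra.

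For $\mathcal M$ I would take the products $c(x,x_0,\by,\bu)=a(x,x_0)\prod_{i=1}^N b_i(y_i)\prod_{i=1}^N g_i(u_i)$ with $a,b_i$ bounded measurable and $g_i\in C_0(\sU_i)$; this class is stable under multiplication (a product of two $C_0$ functions is again in $C_0$) and generates the Borel $\sigma$-algebra $\B(\sX\times\sX_0\times\sY\times\sU)$. For such $c$ the integral factorizes, and using the pairing notation of \eqref{eq5} one gets
\[ L(x,x_0,\lambda)=a(x,x_0)\,\prod_{i=1}^N\Phi_i(x,\lambda_i),\qquad \Phi_i(x,\lambda_i):=\int_{\sY_i} b_i(y_i)\,\langle\lambda_i(y_i),g_i\rangle\,q_i(y_i,x)\,\mu_i(dy_i). \]
It then remains only to show that each $\Phi_i$ is $\X\otimes\B(\Lambda_i)$-measurable, after which $L=a\cdot\prod_i\Phi_i$ is $\X\otimes\X_0\otimes\bigotimes_i\B(\Lambda_i)=\X\otimes\X_0\otimes\B(\Lambda)$-measurable, so that $\mathcal M\subset\H$.

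The measurability of $\Phi_i$ is the crux of the argument. The first step is to recognize $\Phi_i$ as a $w^*$-pairing: fixing $x$ and setting $f_i^x(y_i)(u_i):=b_i(y_i)\,q_i(y_i,x)\,g_i(u_i)$, the section $f_i^x(y_i,\cdot)=b_i(y_i)q_i(y_i,x)g_i$ lies in $C_0(\sU_i)$ and $f_i^x\in\Car_0(\sY_i\times\sU_i)$; moreover
\[ \int_{\sY_i}\ess\sup_{u_i\in\sU_i}|f_i^x(y_i)(u_i)|\,\mu_i(dy_i)\le\|b_i\|_{\infty}\,\|g_i\|\int_{\sY_i}q_i(y_i,x)\,\mu_i(dy_i)=\|b_i\|_{\infty}\,\|g_i\|<\infty, \]
by Assumption~\ref{as1}, so that $f_i^x\in L_1(\mu_i,C_0(\sU_i))$ by Lemma~\ref{Car} and \eqref{cccv}, and hence $\Phi_i(x,\lambda_i)=\langle\langle\lambda_i,f_i^x\rangle\rangle$. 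This is exactly where the integrable-density structure of Assumption~\ref{as1} is needed: it is what keeps the test function inside $L_1(\mu_i,C_0(\sU_i))$ despite the (possibly unbounded) density $q_i(\cdot,x)$, and I expect this to be the main obstacle, since the $w^*$-topology only pairs against such functions. Granting it, for fixed $x$ the map $\lambda_i\mapsto\langle\langle\lambda_i,f_i^x\rangle\rangle$ is $w^*$-continuous by the very definition of the $w^*$-topology, hence $\B(\Lambda_i)$-measurable; for fixed $\lambda_i$ the map $x\mapsto\Phi_i(x,\lambda_i)$ is measurable, because the integrand $b_i(y_i)q_i(y_i,x)\langle\lambda_i(y_i),g_i\rangle$ is jointly measurable in $(x,y_i)$ (using joint measurability of the density $q_i$ and the kernel property of $\lambda_i$) and one integrates in $y_i$ by Tonelli. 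Since $\Lambda_i$ is separable metric, a Caratheodory-type joint-measurability lemma (continuity in $\lambda_i$, measurability in $x$) yields the desired $\X\otimes\B(\Lambda_i)$-measurability of $\Phi_i$.

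Finally, combining the above with the functional monotone class theorem shows that $\H$ contains every bounded Borel $c$. For the given nonnegative lower semicontinuous (hence Borel) cost $c$, I would then apply this to the truncations $c_k:=\min(c,k)$, which are bounded and Borel, so that each $L_{c_k}$ is measurable; since $c_k\uparrow c$ pointwise and all inner measures are probability measures, monotone convergence gives $L_{c_k}\uparrow L$ pointwise, and the measurability of $L$ follows.
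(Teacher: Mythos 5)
Your proof is correct, and it reaches the conclusion by a genuinely different decomposition than the paper's. The paper keeps the densities $q_i$ \emph{out} of the measure: it fixes the $x$-independent measure $\prod_i \gamma_i(y_i)(du_i)\otimes\mu_i(dy_i)$ and bootstraps the integrand --- products of $C_c$ functions, then arbitrary $C_c$ functions via Stone--Weierstrass, then indicators of closed sets, then Borel indicators, simple functions, and finally the nonnegative measurable integrand $c\prod_i q_i$. Because the measure there does not depend on $x$, the product case splits into factors each depending on a single variable ($w^*$-continuous in $\gamma_i$, measurable in $(x,x_0)$), so the paper never needs any joint-measurability lemma. You instead keep $q_i(y_i,x)$ \emph{inside} the measure, so your product case produces the genuinely two-variable functions $\Phi_i(x,\lambda_i)=\langle\langle\lambda_i,f_i^x\rangle\rangle$ with $f_i^x=b_i\,q_i(\cdot,x)\,g_i$, and you need two ingredients the paper avoids: the observation that $f_i^x\in L_1\bigl(\mu_i,C_0(\sU_i)\bigr)$ precisely because $\int_{\sY_i} q_i(y_i,x)\,\mu_i(dy_i)=1$, and a Caratheodory-type joint-measurability lemma (separate continuity in $\lambda_i$, measurability in $x$, plus separability and metrizability of $\Lambda_i$). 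Both ingredients are sound, and your functional monotone class theorem with the multiplicative class $a\prod_i b_i\prod_i g_i$ legitimately replaces the paper's hand-rolled Stone--Weierstrass/closed-set bootstrap; your final truncation $c_k=\min(c,k)$ matches the paper's passage from simple to general nonnegative measurable integrands. The trade-off: the paper's decoupling makes measurability in $\lambda$ trivial at the price of a longer approximation chain, while your route is shorter and makes explicit the role of the normalization of $q_i$ (invisible in the paper, where $q_i$ is just part of a nonnegative measurable integrand), but it leans on the Caratheodory lemma and on separability of $\Lambda_i$. One point worth stating explicitly in either argument: your Tonelli step and the paper's final substitution $h=c\prod_i q_i$ both require a jointly measurable version of $(y_i,x)\mapsto q_i(y_i,x)$, which is implicit in Assumption~\ref{as1}.
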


\begin{proof}
Note that
\begin{align}
L(x,x_0,\lambda) &= \int_{\sY \times \sU} c(x,x_0,\by,\bu) \, \prod_{i=1}^N q(y_i|x) \, \prod_{i=1}^N \lambda_i(y_i)(du_i) \otimes \mu_i(dy_i). \nonumber
\end{align}
One can prove that the following function is $\X \otimes \X_0 \otimes \B(\Lambda) /\B(\R)$-measurable:
\begin{align}
\sX \times \sX_0 \times \Lambda \ni (x,x_0,\gamma) &\mapsto \int_{\sY \times \sU} 1_{A}(\bu) \, 1_B(\by) \, 1_C(x_0) \, 1_D(x) \, \nonumber \\
&\phantom{xxxxxxxxxxxxx}\prod_{i=1}^N \gamma_i(y_i)(du_i) \otimes \mu_i(dy_i) \in \R, \label{eq10}
\end{align}
where $A \in \U$, $B = \prod_{i=1}^N B_i \in \X$, $C \in \X_0$, and $D \in \X$. Indeed, this can be established by first proving the measurability of the following function:
\begin{align}
\sX \times \sX_0 \times \Lambda \ni (x,x_0,\gamma) &\mapsto \int_{\sY \times \sU} \prod_{i=1}^N g_i(u_i) \, 1_B(\by) \, 1_C(x_0) \, 1_D(x) \, \nonumber \\
&\phantom{xxxxxxxxxxx}\prod_{i=1}^N \gamma_i(x_i)(du_i) \otimes \mu_i(dy_i) \in \R, \label{eq9}
\end{align}
where $g_i \in C_c(\sU_i)$ for all $i$. Then by Stone-Weiestrass Theorem \cite[Lemma 6.1]{Lan93}, any continuous and compactly supported $g$ on $\sU$ can be uniformly approximated by functions of the form
\begin{align}
\sum_{j=1}^m \prod_{i=1}^N g_{j,i}(u_i), \nonumber
\end{align}
where $g_{j,i} \in C_c(\sU_i)$ for all $i,j$. Hence, if we replace $\prod_{i=1}^N g_i(u_i)$ with any continuous and compactly supported $g$ on $\sU$ in (\ref{eq9}), the function is still measurable. This is also true if $g$ is nonnegative, continuous and bounded, since, for any such $g$, we can find a sequence of continuous and compactly supported functions $\{g_n\}$ such that $0 \leq g_1 \leq g_2 \leq \ldots \leq g_n \leq \ldots \leq g$ and $g_n$ converges to $g$ pointwise (see the proof of \cite[Proposition 1.4.18]{HeLa03}). Finally, for any closed set $F \subset \sU$, one can approximate pointwise the indicator function $1_F$ by continuous and bounded functions $h_n(u) = \max\bigl(1 - n d_{\sU}(u,C), 0 \bigr)$, where $d_{\sU}$ is the product metric on $\sU$ and $d_{\sU}(u,C) = \inf_{y \in C} d_{\sU}(u,y)$. Hence, (\ref{eq10}) is measurable if $A$ is closed. Since $\sU$ is endowed with the Borel $\sigma$-algebra $\U$ (the smallest $\sigma$-algebra that contains the closed sets), measurability is still true if $A$ is in $\U$. Therefore, (\ref{eq10}) is measurable. Then, this implies the measurability of (\ref{eq10}) when $1_{A}(\bu) 1_B(\by) 1_C(x_0) 1_D(x)$ is replaced by $1_G$ where $G \in \U \otimes \Y \otimes \X_0 \otimes \X$. Hence, measurability holds for simple functions, and therefore, for any measurable function such as $c(x,x_0,\by,\bu) \prod_{i=1}^N q(y_i|x)$.
\end{proof}

In the view of previous results, the following static centralized control problem is equivalent to the static team problem $\mathbf{(S)}$:
\begin{align}
\mathbf{(SC)} \inf_{\lambda \in \A} \cE \biggl[ L\biggl(x,x_0,\lambda(x_0)\biggr) \biggr]. \nonumber
\end{align}
Recall that
\begin{align}
\A \triangleq \biggl\{ \lambda: \sX_0 \rightarrow \Lambda; \text{ }  \lambda \text{ is } \X_0/\B(\Lambda)-\text{measurable} \biggr\}. \nonumber
\end{align}
Hence, one can interpret this static centralized control problem as follows: $x$ represents the state of the system and $x_0$ represents the observation of the decision maker. Depending on the observation $x_0$, decision maker chooses its strategy from the control space $\Lambda$ to minimize the expectation of the cost function $L: \sX \times \sX_0 \times \Lambda  \rightarrow \R_+$. Therefore, if $\lambda^* \in \A$ satisfies the following
\begin{align}
&\inf_{\lambda \in \Lambda} \cE \biggl[ L\biggl(x,x_0,\lambda\biggr) \bigg| x_0 \biggr] = \cE \biggl[ L\biggl(x,x_0,\lambda^*(x_0)\biggr) \bigg| x_0 \biggr] \nonumber
\end{align}
for $\mu_0$-a.e. $x_0 \in \sX_0$, then $\lambda^*$ is an optimal policy. The existence of such policy can be established using measurable selection theorems. To be able to apply such theorems, one needs to establish some properties of the set $\Lambda$ and of the following function:
\begin{align}
M(x_0,\lambda): \sX_0 \times \Lambda \ni (x_0,\lambda) \mapsto \cE \biggl[ L\biggl(x,x_0,\lambda\biggr) \bigg| x_0 \biggr] \in \R. \label{eq11}
\end{align}
For instance, the optimal selector exists if $M(x_0,\,\cdot\,)$ is inf-compact for all $x_0 \in \sX$ \cite[Theorem 2.2]{FeKa21}.

\begin{remark}
Note that, in addition to the existence of team optimal policies, we can also use the centralized reduction $\mathbf{(SC)}$ to establish approximation results for such problems. Indeed, since $\Lambda$ is relatively sequentially compact, it can be approximated by a finite set with arbitrary precision. Moreover, as $\sX_0$ is a locally compact, separable, and complete metric space, we can also approximate $\sX_0$ with a finite set. Consequently, the overall approximate problem becomes finite and can be solved more easily. In this context, we can apply the approximation method introduced in \cite{SaYuLi17} to discretize the action space $\Lambda$.
\end{remark}

\begin{theorem}\label{main1}
$M(x_0,\,\cdot\,)$ is lower semi-continuous for any $x_0 \in \sX_0$. 
\end{theorem}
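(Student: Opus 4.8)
The plan is to identify $M(x_0,\cdot)$ with integration of $c$ against a strategic measure on the full space and to obtain lower semi-continuity from weak convergence of these measures. Fix $x_0\in\sX_0$ and let $\mu(\,\cdot\,|x_0)$ be a regular conditional distribution of $x$ given $x_0$, which exists since the spaces are Polish. For $\lambda=(\lambda_1,\ldots,\lambda_N)\in\Lambda$ define the probability measure on $\sX\times\sY\times\sU$
\[
R_{\lambda}(dx,d\by,d\bu)\triangleq \mu(dx\,|\,x_0)\,\prod_{i=1}^N q_i(y_i,x)\,\mu_i(dy_i)\,\prod_{i=1}^N\lambda_i(y_i)(du_i),
\]
so that, unwinding the definitions of $L$ and $\tilde c$, one has $M(x_0,\lambda)=\int c(x,x_0,\by,\bu)\,R_{\lambda}(dx,d\by,d\bu)$. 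Since each $\Lambda_i$ is metrizable, so is $\Lambda$, and lower semi-continuity is equivalent to sequential lower semi-continuity. I would therefore take $\lambda^{(n)}\to\lambda$ in $\Lambda$, i.e. $\lambda_i^{(n)}\rightharpoonup^*\lambda_i$ for each $i$, and show $R_{\lambda^{(n)}}\to R_{\lambda}$ weakly. Because $(x,\by,\bu)\mapsto c(x,x_0,\by,\bu)$ is nonnegative and lower semi-continuous for the fixed $x_0$ (a coordinate slice of a jointly lower semi-continuous function), the lower semi-continuous form of the portmanteau theorem then gives $\liminf_n M(x_0,\lambda^{(n)})\ge M(x_0,\lambda)$, which is the claim.

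To prove the weak convergence, note that all $R_{\lambda^{(n)}}$ and $R_\lambda$ are probability measures, so it suffices to test against a family dense in $C_0(\sX\times\sY\times\sU)$. As in the proof of Lemma~\ref{measurable}, Stone--Weierstrass reduces this to test functions of the product form $\psi(x,\by,\bu)=\phi(x,\by)\prod_{i=1}^N g_i(u_i)$ with $\phi\in C_0(\sX\times\sY)$ and $g_i\in C_0(\sU_i)$. For such $\psi$ the integral over $\bu$ factorizes, and writing $G_i^{(n)}(y_i)\triangleq\int_{\sU_i}g_i\,d\lambda_i^{(n)}(y_i)$ and $G_i(y_i)\triangleq\int_{\sU_i}g_i\,d\lambda_i(y_i)$ one gets
\[
\int\psi\,dR_{\lambda^{(n)}}=\int_{\sX}\Big[\int_{\sY}\phi(x,\by)\prod_{i=1}^N G_i^{(n)}(y_i)\,\prod_{i=1}^N q_i(y_i,x)\,\mu_i(dy_i)\Big]\mu(dx\,|\,x_0).
\]
Each $G_i^{(n)}$ is bounded by $\|g_i\|$, and testing $\lambda_i^{(n)}\rightharpoonup^*\lambda_i$ against the functions $a(y_i)g_i(u_i)\in L_1(\mu_i,C_0(\sU_i))$, $a\in L_1(\mu_i)$, shows that this $w^*$-convergence is exactly $G_i^{(n)}\to G_i$ in the weak-$*$ topology $\sigma(L_\infty(\mu_i),L_1(\mu_i))$.

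The crux is then a product principle: if $G_i^{(n)}\to G_i$ weak-$*$ in $L_\infty(\mu_i)$ with $\sup_n\|G_i^{(n)}\|_\infty\le C_i<\infty$ for each $i$, then for every $\Phi\in L_1(\prod_i\mu_i)$ one has $\int\Phi\prod_iG_i^{(n)}\,\prod_i\mu_i(dy_i)\to\int\Phi\prod_iG_i\,\prod_i\mu_i(dy_i)$. For $\Phi=\prod_i a_i(y_i)$ with $a_i\in L_1(\mu_i)$ the integral factorizes into a finite product of scalar sequences, each convergent by weak-$*$ convergence; finite sums of such product functions are dense in $L_1(\prod_i\mu_i)$, and the uniform bounds $C_i$ extend the convergence to all $\Phi$ by a standard three-$\epsilon$ estimate. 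Applying this for each fixed $x$ with $\Phi_x(\by)=\phi(x,\by)\prod_i q_i(y_i,x)$, which lies in $L_1(\prod_i\mu_i)$ with $\|\Phi_x\|_1\le\|\phi\|_\infty$ because each $q_i(\,\cdot\,,x)$ is a $\mu_i$-density, shows that the bracketed inner integral converges pointwise in $x$ while staying bounded by $\|\phi\|_\infty\prod_i\|g_i\|$; dominated convergence in $x$ then yields $\int\psi\,dR_{\lambda^{(n)}}\to\int\psi\,dR_\lambda$. Combined with the Stone--Weierstrass reduction and the fact that all total masses equal one (so vague convergence upgrades to weak convergence), this establishes $R_{\lambda^{(n)}}\to R_\lambda$ weakly.

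The hard part is precisely this multilinear product structure $\prod_i\lambda_i(y_i)$ in the strategic measure: $w^*$-convergence is assumed only coordinatewise, and products of weak-$*$ convergent sequences need not converge to the product of the limits. What rescues the argument is that each factor $G_i^{(n)}$ depends on a different observation variable $y_i$, so against product-form test functions the pairing factorizes and the coordinatewise convergences combine, while the uniform $L_\infty$ bounds — available because the $\lambda_i^{(n)}(y_i)$ are probability measures — push this from product functions to all of $L_1$. The remaining ingredients (reduction to product test functions, the vague-to-weak passage using that the measures are probabilities, and the concluding portmanteau step) are routine.
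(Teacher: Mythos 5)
Your proof is correct, and it reaches the theorem by a genuinely different mechanism than the paper's. Both arguments reduce to product-form test functions via Stone--Weierstrass and both exploit the $w^*$-duality, but the core steps diverge. The paper approximates $c$ from below by an increasing sequence of $C_c$ functions and proves \emph{continuity} of $\lambda\mapsto M_g(x_0,\lambda)$ for fully factorized $g=r\,t\,\prod_i f_ig_i$ by telescoping over the agents; in the $j$-th telescoping term the pairing of $\lambda_j^{(n)}-\lambda_j$ with $b_x=f_jg_jq_j(\cdot,x)$ is integrated against a reference measure $T_{-j}$ that itself depends on $n$, so pointwise-in-$x$ convergence cannot be integrated there, and the paper must prove the uniform statement $\sup_{x\in K}\bigl|\langle\langle \lambda_j^{(n)}-\lambda_j,b_x\rangle\rangle\bigr|\to 0$ via total boundedness of $\{b_x\}_{x\in K}$ in $L_1(\mu_j,C_0(\sU_j))$ together with the uniform boundedness principle --- and that total boundedness is exactly where the total-variation continuity of $W_j$ in Assumption~\ref{as1} is consumed. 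Your decomposition sidesteps the moving reference measure entirely: all $n$-dependence is collected in the single integrand $\prod_i G_i^{(n)}(y_i)$, paired against the fixed measure $\Phi_x\,\prod_i\mu_i$, where coordinatewise weak-$*$ convergence in $L_\infty(\mu_i)$ suffices precisely because the factors depend on distinct coordinates (your product principle: factorization on rectangles, density of finite sums of rectangles in $L_1(\prod_i\mu_i)$, and the uniform bounds $\|G_i^{(n)}\|_\infty\le\|g_i\|$ coming from the kernels being probability measures); dominated convergence in $x$ and the portmanteau inequality for the nonnegative lower semicontinuous cost then finish (the paper's monotone $C_c$-approximation of $c$ plays exactly the role of your portmanteau step). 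The payoff of your route is real: it uses only the density part of Assumption~\ref{as1}, $W_i(dy_i|x)=q_i(y_i,x)\mu_i(dy_i)$, and never the total-variation continuity of the observation kernels, so it establishes the theorem under a weaker hypothesis; what the paper's route buys is the stronger uniform-over-compacta estimate, which is what one would need if the measure $\rP(dx|x_0)$ against which $x$ is integrated were itself allowed to vary along the sequence. The two points you leave implicit --- measurability in $x$ of the inner integral, needed for dominated convergence and supplied by Fubini's theorem, and the vague-to-weak upgrade for probability measures --- are indeed routine.
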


\begin{proof}
Fix any $x_0 \in \sX_0$. Recall that $C_c(\sX \times \sX_0 \times \sY \times \sU)$ denotes the set of real continuous functions on $\sX \times \sX_0 \times \sY \times \sU$ with compact support. For any $g \in C_c(\sX \times \sX_0 \times \sY \times \sU)$, we define
\begin{align}
M_g(x_0,\lambda) = \int_{\sX\times\sY \times \sU} g(x,x_0,{\bf y},{\bf u}) \, \prod_{i=1}^N \lambda_i(y_i)(du_i) \,q_i(y_i,x) \mu_i(dy_i) \, \rP(dx|x_0). \nonumber
\end{align}
Note that 
\begin{align}
&M(x_0,\lambda) = \int_{\sX\times\sY \times \sU} c(x,x_0,{\bf y},{\bf u}) \, \prod_{i=1}^N \lambda_i(y_i)(du_i) \,q_i(y_i,x) \mu_i(dy_i) \, \rP(dx|x_0). \nonumber
\end{align}
As $\sX \times \sX_0 \times \sY \times \sU$ is locally compact, one can find a sequence of $\{g_m\} \subset C_c(\sX \times \sX_0 \times \sY \times \sU)$ such that $0\leq g_1 \leq g_2 \leq \ldots \leq g_m \leq \ldots \leq c$ and $g_m \rightarrow c$ pointwise (see the proof of \cite[Proposition 1.4.18]{HeLa03}). Therefore, it is sufficient to prove that $M_g(x_0,\,\cdot\,)$ is lower semi-continuous (what is more, continuous) for any $g \in C_c(\sX \times \sX_0 \times \sY \times \sU)$.

Fix any $g \in C_c(\sX \times \sX_0 \times \sY \times \sU)$. Then by Stone-Weierstrass Theorem \cite[Lemma 6.1]{Lan93}, $g$ can be uniformly approximated by functions of the form
\begin{align}
\sum_{j=1}^k r_j t_j \prod_{i=1}^N f_{j,i} g_{j,i}, \nonumber
\end{align}
where $r_j \in C_c(\sX)$, $t_j \in C_c(\sX_0)$, $f_{j,i} \in C_c(\sY_i)$, and $g_{j,i} \in C_c(\sU_i)$ for each $j=1,\ldots,k$ and $i=1,\ldots,N$. This implies that it is sufficient to prove the continuity of $M_g$ for functions of the form $r \, t \, \prod_{i=1}^N f_{i} g_{i}$, where $r \in C_c(\sX)$, $t \in C_c(\sX_0)$, $f_{i} \in C_c(\sY_i)$, and $g_{i} \in C_c(\sU_i)$ for $i=1,\ldots,N$.  Therefore, in the sequel, we assume that $g = r \, t \, \prod_{i=1}^N f_{i} g_{i}$.

Let $\lambda^n \rightharpoonup^* \lambda$. Then 
\begin{align}
&M_g(x_0,\lambda^n) = t(x_0) \int_{\sX \times \sY \times \sU} \hspace{-16pt} r(x) \, \prod_{i=1}^N f_i(y_i) g_i(u_i) q_i(y_i,x) \, \prod_{i=1}^N \lambda_i^n(y_i)(du_i) \, \mu_i(dy_i) \, \rP(dx|x_0). \nonumber
\end{align}
Define 
\begin{align}
&J_g(\lambda^n) = \int_{\sX \times \sY \times \sU} r(x) \, \prod_{i=1} f_i(y_i) g_i(u_i) q_i(y_i,x) \, \prod_{i=1}^N \lambda_i^n(y_i)(du_i) \, \mu_i(dy_i) \, \rP(dx|x_0). \nonumber
\end{align}
Then we have
\begin{align}
\bigl| J_g(\lambda^n) - J_g(\lambda) \bigr| &\leq  \bigl| J_g(\lambda_1^n,\ldots,\lambda_N^n) - J_g(\lambda_1,\lambda_2^n,\ldots,,\lambda_N^n) \bigr| \nonumber \\
&+  \bigl| J_g(\lambda_1,\lambda_2^n,\ldots,\lambda_N^n) - J_g(\lambda_1,\lambda_2,\lambda_3^n,\ldots,\lambda_N^n) \bigr| \nonumber \\
&\phantom{x}\vdots \nonumber \\
&+  \bigl| J_g(\lambda_1,\ldots,\lambda_{N-1}, \lambda_N^n) - J_g(\lambda_1,\ldots,\lambda_N) \bigr| \nonumber \\
&\eqqcolon \sum_{j=1}^N l_j^{(n)}. \nonumber
\end{align}
Let us consider the $j^{th}$ term in the above expression. Define the probability measure $T_{-j}$ on $\sX \times \sY_{-j} \times \sU_{-j}$ and real function $g_{-j}$ on $\sX \times \sY_{-j} \times \sU_{-j}$ as follows:
\small
\begin{align}
&T_{-j} = \biggl( \prod_{i=1}^{j-1}  \lambda_i(du_i|y_i) q_i(y_i,x) \mu_i(dy_i) \biggr) \times \biggl( \prod_{i=j+1}^{N} \hspace{-5pt} \gamma_i^{(n)}(du_i|y_i) q_i(y_i,x) \mu_i(dy_i) \biggr) \rP(dx|x_0) \nonumber
&\intertext{and}
&g_{-j} \coloneqq r \prod_{i \neq j} f_i g_i. \nonumber
\end{align}
\normalsize
Then the $j^{th}$ term can be written as
\begin{align}
l_j^{(n)} &= \biggl| \int g_{-j} \biggl( \int f_j g_j q_j d\lambda_j^{(n)}\otimes\mu_j \biggr) dT_{-j} - \int g_{-j} \biggl( \int f_j g_j q_j d\lambda_j\otimes\mu_j \biggr) dT_{-j} \biggr|. \nonumber
\end{align}

Define, for each $x \in \sX$, the function
\begin{align}
b_x(y_j,u_j) \coloneqq f_j(y_j) g_j(u_j) q_j(y_j,x). \nonumber
\end{align}
It is obvious that $b_x \in L_1(\mu_j,C_0(\sU_j))$ for all $x \in \sX$. We will prove that the set $\{b_x\}_{x \in K} \subset L_1(\mu_j,C_0(\sU_j))$ is totally bounded, where $K \subset \sX$ is the compact support of the function $r$ on $\sX$. Indeed, let $x,\tilde{x} \in K$. Then
\begin{align}
&\|b_x - b_{\tilde{x}}\|_1 \coloneqq \int_{\sY_j} \ess \sup_{u_j \in \sU_j} \bigl| f_j(y_j) g_j(u_j) q_j(y_j,x) - f_j(y_j) g_j(u_j) q_j(y_j,\tilde{x}) | \mu_j(dy_j) \nonumber \\
&\leq \|f_j\|_{\infty} \|g_j\|_{\infty} \int_{\sY_j} \bigl| q_j(y_j,x) - q_j(y_j,\tilde{x}) | \mu_j(dy_j) \nonumber \\
&= \|f_j\|_{\infty} \|g_j\|_{\infty} \bigl\|W_j(\,\cdot\,|x) - W_j(\,\cdot\,|\tilde{x})\bigr\|_{TV}. \label{unif}
\end{align}
Since $W_j$ is assumed to be continuous with respect to the total variation norm, the set $\{b_x\}_{x \in K}$ is totally bounded; that is, for any $\varepsilon > 0$, there exists a finite number of points $x_1,\ldots,x_n \in K$ such that
\begin{align}
\{  b_x  \}_{x \in K} \subset \bigcup_{i=1}^n B_1(b_{x_i},\epsilon), \nonumber
\end{align}
where $B_1(b_{x},\varepsilon) \coloneqq \{b \in L_1(\mu_j,C_0(\sU_j)): \|b-b_x\|_1 \leq \varepsilon\}$. Indeed, fix any $\varepsilon > 0$. Note first that the observation kernel $W_j: K \rightarrow \P(\sY_j)$ is uniformly continuous since $K$ is compact. Hence for any $\epsilon > 0$, one can find $\delta > 0$ such that if $d_{\sX}(x,y) < \delta$, then
$\|W_j(\,\cdot\,|x) - W_j(\,\cdot\,|y)\|_{TV} < \epsilon$. For this $\delta > 0 $, since $K$ is compact, one can find a finite number of points $x_1,\ldots,x_n \in K$ such that
\begin{align}
K \subset \bigcup_{i=1}^n B(x_i,\delta),  \nonumber
\end{align}
where $B(x,\delta) \coloneqq \{y \in \sX: d_{\sX}(x,y) \leq \delta\}$. But this implies that
\begin{align}
\{  b_x  \}_{x \in K} \subset \bigcup_{i=1}^n B_1\bigl(b_{x_i},\epsilon \|f_j\|_{\infty} \|g_j\|_{\infty} \bigr) \nonumber
\end{align}
because if $b_x$ is some element in the set  $\{  b_x  \}_{x \in K}$, then $x$ is in $B(x_i,\delta)$ for some $i$; that is, $d_{\sX}(x,x_i) < \delta$. This implies from uniform continuity that $\|W_j(\,\cdot\,|x) - W_j(\,\cdot\,|x_i)\|_{TV} < \epsilon$, and so,  by (\ref{unif}), we have $\|b_{x_i} - b_x\|_1 < \epsilon \|f_j\|_{\infty} \|g_j\|_{\infty}$.
By choosing $\epsilon = \varepsilon/(\|f_j\|_{\infty} \|g_j\|_{\infty})$, we complete the proof of the assertion.

Using the total boundedness of the set $\{b_x\}_{x \in K}$, we prove the following:
\begin{align}
\lim_{n\rightarrow\infty} \sup_{x \in K} \bigl| \langle\langle \lambda_j^{(n)},b_x \rangle \rangle - \langle\langle \lambda_j,b_x \rangle\rangle \bigr| = 0. \label{des1}
\end{align}
Suppose (\ref{des1}) is not true. Then there exists a sub-sequence $\{\lambda_j^{n_k}\}$ of $\{\lambda_j^{n}\}$ such that, for all $k$, we have
\begin{align}
\sup_{x \in K} \bigl| \langle\langle  \lambda_j^{(n_k)},b_x \rangle\rangle - \langle\langle \lambda_j,b_x \rangle\rangle \bigr| > 0
\end{align}
Suppose $\{\varepsilon_k\}$ be a sequence of positive real numbers converging to zero. For each $k$, let $x_k \in K$ be such that
\begin{align}
\bigl| \langle\langle \lambda_j^{(n_k)},b_{x_k} \rangle\rangle - \langle\langle \lambda_j,b_{x_k} \rangle\rangle \bigr| &> \sup_{x \in K} \bigl| \langle\langle \lambda_j^{(n_k)},b_x \rangle\rangle - \langle\langle \lambda_j,b_x \rangle\rangle \bigr| - \varepsilon_k \nonumber \\
&> 0. \label{cont}
\end{align}
Since $\{b_{x_{k}}\}$ is totally bounded, there exists a subsequence $\{b_{x_{k_l}}\}$ such that
\begin{align}
b_{x_{k_l}} \rightarrow b \in L_1(\mu_j,C_0(\sU_j)) \text{ in $L_1$-norm.} \nonumber
\end{align}
Then, we have
\begin{align}
\langle\langle \lambda_j^{(n_{k_l})},b_{x_{k_l}} \rangle\rangle \rightarrow \langle\langle \lambda_j,b \rangle\rangle \label{important}
\end{align}
as $\lambda_j^{(n_{k_l})} \rightharpoonup^* \lambda_j$ by Uniform Bounded Principle. In addition, we also have $$\langle\langle \lambda_j,b_{x_{k_l}} \rangle\rangle \rightarrow \langle\langle \lambda_j, b \rangle\rangle.$$ Hence,
\begin{align}
\lim_{l\rightarrow\infty} \bigl| \langle\langle \lambda_j^{(n_{k_l})},b_{x_{k_l}} \rangle\rangle - \langle\langle \lambda_j,b_{x_{k_l}} \rangle\rangle \bigr| = 0. \nonumber
\end{align}
This contradicts with (\ref{cont}), and so, (\ref{des1}) is true.

Note that we have
\begin{align}
\int f_j g_j q_j d\lambda_j^{(n)}\otimes\mu_j - \int f_j g_j q_j d&\lambda_j\otimes\mu_j = \langle\langle  \lambda_j^{(n)},b_x \rangle\rangle - \langle\langle  \lambda_j,b_x \rangle\rangle. \nonumber
\end{align}
Therefore, we can bound $l_j^{(n)}$ as
\begin{align}
l_j^{(n)} &\leq \int g_{-j} \bigl| \langle\langle  \lambda_j^{(n)},b_x \rangle\rangle - \langle\langle  \lambda_j,b_x \rangle\rangle \bigr| \, dT_{-j} \nonumber \\
&\leq \|g_{-j}\|_{\infty} \sup_{x\in K} \bigl| \langle\langle \lambda_j^{(n)},b_x \rangle\rangle - \langle\langle \lambda_j,b_x \rangle\rangle \bigr|. \nonumber
\end{align}
Note that the last term converges to zero as $n\rightarrow\infty$ by (\ref{des1}). Since $j$ is arbitrary, $l_j^{(n)} \rightarrow 0$ as $n\rightarrow\infty$ for all $j=1,\ldots,N$. This implies that $J_g(\lambda^n) \rightarrow J_g(\lambda)$, which proves that $M_g(x_0,\,\cdot\,)$ is continuous.
\end{proof}

\begin{lemma}\label{prf1}
Let $\C \subset \Lambda$ be such that the following family of probability measures are tight
$$
\P_{\C} \triangleq \left\{ \prod_{i=1}^N \lambda(y_i)(du_i) \otimes \mu_i(dy_i): (\lambda_1,\ldots,\lambda_N) \in \C \right\}.
$$
Then, the closure of $\C$ with respect to $w^*$-topology is in $\Lambda$.
\end{lemma}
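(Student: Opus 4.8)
The plan is to show that \emph{tightness prevents the escape of mass at infinity}, so that every $w^*$-limit of kernels drawn from $\C$ remains a genuine \emph{probability} kernel rather than merely a sub-probability one. Since each $\Lambda_i$ sits inside the $w^*$-compact, $w^*$-metrizable space $\rL_{\infty}\bigl(\mu_i,\P_{_{\leq1}}(\sU_i)\bigr)$, the ambient product $\prod_{i=1}^N \rL_{\infty}\bigl(\mu_i,\P_{_{\leq1}}(\sU_i)\bigr)$ is metrizable and its closure is sequential. Hence it suffices to take a sequence $\lambda^n=(\lambda_1^n,\ldots,\lambda_N^n)\in\C$ with $\lambda^n\rightharpoonup^*\lambda=(\lambda_1,\ldots,\lambda_N)$ and to prove $\lambda\in\Lambda$. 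Because $\rL_{\infty}\bigl(\mu_i,\P_{_{\leq1}}(\sU_i)\bigr)$ is $w^*$-closed, each limit component $\lambda_i$ is at least a sub-probability kernel, i.e. $\lambda_i(y_i)(\sU_i)\le 1$ $\mu_i$-a.e.; the whole task is to upgrade this to equality.

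To this end I would pass from kernels to their associated joint laws. Write $\nu^n\triangleq\prod_{i=1}^N\lambda_i^n(y_i)(du_i)\otimes\mu_i(dy_i)\in\P_{\C}$ and $\nu\triangleq\prod_{i=1}^N\lambda_i(y_i)(du_i)\otimes\mu_i(dy_i)$, the latter a sub-probability measure on $\sY\times\sU$ whose total mass factorizes as
\begin{align}
\nu(\sY\times\sU)=\prod_{i=1}^N\int_{\sY_i}\lambda_i(y_i)(\sU_i)\,\mu_i(dy_i).\nonumber
\end{align}
Since each factor lies in $[0,1]$, the product equals $1$ \emph{iff} every factor equals $1$, and the a.e. bound $\lambda_i(y_i)(\sU_i)\le1$ then forces $\lambda_i(y_i)(\sU_i)=1$ $\mu_i$-a.e., i.e. $\lambda_i\in\Lambda_i$ by Lemma~\ref{kernel}. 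Thus the proof reduces to establishing $\nu(\sY\times\sU)=1$.

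The key step is to show that $\nu^n\to\nu$ \emph{vaguely}, i.e. $\int g\,d\nu^n\to\int g\,d\nu$ for every $g\in C_c(\sY\times\sU)$. For a product test function $g=\prod_{i=1}^N f_i(y_i)g_i(u_i)$ with $f_i\in C_c(\sY_i)$ and $g_i\in C_c(\sU_i)$, the map $h_i(y_i)(u_i)\triangleq f_i(y_i)g_i(u_i)$ is a Caratheodory function vanishing at infinity in $u_i$ and satisfying the integrability condition (\ref{cccv}), hence $h_i\in L_1\bigl(\mu_i,C_0(\sU_i)\bigr)$; Fubini then gives $\int g\,d\nu^n=\prod_{i=1}^N\langle\langle\lambda_i^n,h_i\rangle\rangle$, and the defining $w^*$-convergence $\lambda_i^n\rightharpoonup^*\lambda_i$ makes each of the finitely many factors converge, so $\int g\,d\nu^n\to\prod_{i=1}^N\langle\langle\lambda_i,h_i\rangle\rangle=\int g\,d\nu$. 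By the Stone--Weierstrass theorem \cite[Lemma 6.1]{Lan93} finite sums of such products are uniformly dense in $C_c(\sY\times\sU)$, and since all the $\nu^n$ and $\nu$ have mass at most $1$, a routine $\varepsilon/3$ estimate extends the convergence to every $g\in C_c(\sY\times\sU)$.

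Finally I would combine vague convergence with tightness. Given $\varepsilon>0$, tightness of $\P_{\C}$ yields a compact $K\subset\sY\times\sU$ with $\nu^n(K)\ge1-\varepsilon$ for all $n$; by Urysohn's lemma for locally compact spaces choose $g\in C_c(\sY\times\sU)$ with $0\le g\le1$ and $g\equiv1$ on $K$. Then $\int g\,d\nu^n\ge\nu^n(K)\ge1-\varepsilon$, and letting $n\to\infty$ gives $\nu(\sY\times\sU)\ge\int g\,d\nu\ge1-\varepsilon$; as $\varepsilon$ is arbitrary, $\nu(\sY\times\sU)=1$, which completes the argument by the reduction above. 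The main obstacle is the middle step, namely transferring the \emph{componentwise} $w^*$-convergence to \emph{joint} vague convergence of the product laws: here the product structure (reduced to product test functions via Stone--Weierstrass, as in Theorem~\ref{main1}) must be used together with the uniform mass bound. Once joint vague convergence is in hand, the tightness/no-escape conclusion is routine.
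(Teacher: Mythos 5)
Your proof is correct, and its skeleton matches the paper's: both pass from the kernels to the product laws $\nu^n = \prod_{i=1}^N \lambda_i^n(y_i)(du_i)\otimes\mu_i(dy_i)$ and $\nu = \prod_{i=1}^N \lambda_i(y_i)(du_i)\otimes\mu_i(dy_i)$, and both exploit the fact that componentwise $w^*$-convergence yields convergence of $\int \prod_i h_i \, d\nu^n$ for product test functions built from elements of $L_1(\mu_i,C_0(\sU_i))$. Where you diverge is in how tightness is cashed in. The paper invokes Prokhorov's theorem: tightness of $\P_{\C}$ produces a subsequence $\nu^{n_k}$ converging \emph{weakly} to some probability measure $\hat{\nu}$, and then $\hat{\nu}$ is identified with $\nu$ because products $\prod_i f_i$ with $f_i \in C_c(\sY_i\times\sU_i)$ form a separating class; since $\hat{\nu}$ is a probability measure, so is $\nu$, and the conclusion follows. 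You instead avoid Prokhorov entirely: you first upgrade convergence on product test functions to genuine vague convergence on all of $C_c(\sY\times\sU)$ (Stone--Weierstrass plus an $\varepsilon/3$ estimate, which is legitimate precisely because all the measures involved have mass at most $1$), and then use tightness directly through a Urysohn function $g \equiv 1$ on a compact set of mass $\geq 1-\varepsilon$ to show the vague limit $\nu$ retains full mass. Your explicit reduction via the factorization $\nu(\sY\times\sU) = \prod_{i=1}^N \int_{\sY_i}\lambda_i(y_i)(\sU_i)\,\mu_i(dy_i)$, together with the $w^*$-closedness of the sub-probability kernels, then forces $\lambda_i(y_i)\in\P(\sU_i)$ $\mu_i$-a.e.; the paper uses the same closedness implicitly (the closure is taken inside the compact ambient space $\prod_i \rL_{\infty}\bigl(\mu_i,\P_{_{\leq1}}(\sU_i)\bigr)$) but never needs the mass factorization, since equality $\nu=\hat{\nu}$ delivers the conclusion in one stroke. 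The trade-off: the paper's route is shorter once Prokhorov is granted, while yours is more elementary and self-contained (only the definition of tightness and Urysohn's lemma), at the cost of the extra density/approximation step needed to get full vague convergence.
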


\begin{proof}
Let $(\lambda_1^n,\ldots,\lambda_N^n) \rightharpoonup^* (\lambda_1,\ldots,\lambda_N)$ where $(\lambda_1^n,\ldots,\lambda_N^n) \in \C$ for all $n$.
For any $i=1,\ldots,N$, we prove that $\lambda_i(y_i) \in \P(\sU_i)$ $\mu_i$-a.e., which completes the proof. 

For any $n$, define 
\begin{align}
\nu^n(d\by,d\bu) &\triangleq \prod_{i=1}^N \lambda^n(y_i)(du_i) \otimes \mu_i(dy_i) \nonumber \\
\nu(d\by,d\bu) &\triangleq \prod_{i=1}^N \lambda(y_i)(du_i) \otimes \mu_i(dy_i). \nonumber
\end{align}
Note that for any $(f_1,\ldots,f_N) \in \prod_{i=1}^N L_1(\mu_i,C_0(\sU_i))$, we have 
\begin{align}
&\lim_{n\rightarrow \infty} \int_{\sY \times \sU} \prod_{i=1}^N f_i(y_i)(u_i) \, \nu^n(d\by,d\bu) 
=\lim_{n\rightarrow \infty}\int_{\sY \times \sU} \prod_{i=1}^N f_i(y_i)(u_i) \, \lambda^n_i(y_i)(du_i) \,\, \mu_i(dy_i) \nonumber \\
&= \int_{\sY \times \sU} \prod_{i=1}^N f_i(y_i)(u_i) \, \lambda_i(y_i)(du_i) \,\, \mu_i(dy_i) \nonumber \\
&= \int_{\sY \times \sU} \prod_{i=1}^N f_i(y_i)(u_i) \, \nu(d\by,d\bu). \nonumber 
\end{align} 
Moreover, if 
\begin{align}
&\int_{\sY \times \sU} \prod_{i=1}^N f_i(y_i, u_i) \, \nu(d\by) = \int_{\sY \times \sU} \prod_{i=1}^N f_i(y_i, u_i) \, \hat{\nu}(d\by). \nonumber 
\end{align} 
for all $(f_1,\ldots,f_N) \in \prod_{i=1}^N C_c(\sY_i \times \sU_i)$, then 
$
\nu(d\by) = \hat{\nu}(d\by).
$
Since $\P_{\C}$ is tight, there exists a subsequence $\{\nu^{n_k}\}$ that converges to some probability measure $\hat{\nu}$. Note that since for any $f_i \in C_c(\sY_i \times \sU_i)$, $f_i(y_i,\cdot) \in L_1(\mu_i,C_0(\sU_i))$, we have 
\begin{align}
\lim_{k\rightarrow \infty} \int_{\sY \times \sU} \prod_{i=1}^N f_i(y_i,u_i) \, \lambda^{n_k}_i(y_i)(du_i) \,\, \mu_i(dy_i) &= \int_{\sY \times \sU} \prod_{i=1}^N f_i(y_i,u_i) \, \lambda_i(y_i)(du_i) \,\, \mu_i(dy_i) \nonumber \\ 
&= \int_{\sY \times \sU} \prod_{i=1}^N f_i(y_i,u_i) \, \nu(d\by,d\bu) \nonumber \\
&= \int_{\sY \times \sU} \prod_{i=1}^N f_i(y_i,u_i) \, \hat{\nu}(d\by,d\bu) \nonumber,
\end{align} 
where the last equality is due to weak convergence of $\nu^{n_k}$ to $\hat{\nu}$. Hence, $$\hat{\nu}=\nu=\prod_{i=1}^N \lambda_i(y_i)(du_i) \,\, \mu_i(dy_i)$$ is a probability measure and so 
for any $i=1,\ldots,N$, $\lambda_i(y_i) \in \P(\sU_i)$ $\mu_i$-a.e., which completes the proof. 
\end{proof}

\begin{definition}{(\cite[Definition 4.4]{GuYuBaLa15})}\label{aux4}
Let $\sE_1$, $\sE_2$, and $\sE_3$ be Borel spaces. A non-negative measurable function $\varphi: \sE_1 \times \sE_2 \times \sE_3 \rightarrow [0,\infty)$ is in class $\mathrm{IC}(\sE_1,\sE_2)$ if for every $M > 0$ and for every compact set $K \subset \sE_1$, there exists a compact set $L \subset \sE_2$ such that
\begin{align}
\inf_{K \times L^c \times \sE_3} \varphi(e_1,e_2,e_3) \geq M. \nonumber
\end{align}
\end{definition}

Using this definition, we now state the following result.

\begin{theorem}{(\cite[Lemma 4.5]{GuYuBaLa15})}\label{aux5}
Suppose $\varphi: \sE_1 \times \sE_2 \times \sE_3 \rightarrow [0,\infty)$ is in class $\mathrm{IC}(\sE_1,\sE_2)$. Let $m>0$ and $F_1 \subset \P(\sE_1)$ be a tight set of measures. Define
\begin{align}
&F =  \phantom{x}\biggl\{ \nu \in \P(\sE_1 \times \sE_2 \times \sE_3): \mathrm{Proj}_{\sE_1}(\nu) \in F_1 \text{ and } \int \varphi d\nu \leq m \biggr\}. \nonumber
\end{align}
Then $\mathrm{Proj}_{\sE_1 \times \sE_2}(F)$ is a tight set of measures.
\end{theorem}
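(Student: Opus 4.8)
The plan is to prove tightness directly from the definition: for every $\varepsilon > 0$ I will exhibit a single compact set $C \subset \sE_1 \times \sE_2$, independent of the particular $\nu \in F$, such that $\mathrm{Proj}_{\sE_1 \times \sE_2}(\nu)(C) \geq 1 - \varepsilon$ for all $\nu \in F$. The set $C$ will be a product $K \times L$ of compacts, where $K$ comes from the tightness of $F_1$ and $L$ comes from the $\mathrm{IC}$ hypothesis. Two ingredients drive the argument: first, the marginal constraint $\mathrm{Proj}_{\sE_1}(\nu) \in F_1$ controls the mass in the $\sE_1$-direction; second, the uniform inf-compactness encoded in $\mathrm{IC}(\sE_1,\sE_2)$, together with the integral bound $\int \varphi \, d\nu \leq m$ and the non-negativity of $\varphi$, controls the mass in the $\sE_2$-direction via a Markov-type inequality.

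Fix $\varepsilon > 0$. First I would use tightness of $F_1$ to choose a compact $K \subset \sE_1$ with $\rho(K) \geq 1 - \varepsilon/2$ for every $\rho \in F_1$; in particular $\nu(K \times \sE_2 \times \sE_3) \geq 1 - \varepsilon/2$ for every $\nu \in F$. Next, set $M = 2m/\varepsilon$ and apply the $\mathrm{IC}(\sE_1,\sE_2)$ property to this $M$ and this compact $K$ to obtain a compact $L \subset \sE_2$ with $\inf_{K \times L^c \times \sE_3} \varphi \geq M$. For any $\nu \in F$, restricting the integral to $K \times L^c \times \sE_3$ and using the lower bound on $\varphi$ gives $m \geq \int \varphi \, d\nu \geq M \, \nu(K \times L^c \times \sE_3)$, so that $\nu(K \times L^c \times \sE_3) \leq m/M = \varepsilon/2$.

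Combining the two estimates, for every $\nu \in F$ I obtain
\[
\nu(K \times L \times \sE_3) = \nu(K \times \sE_2 \times \sE_3) - \nu(K \times L^c \times \sE_3) \geq (1 - \varepsilon/2) - \varepsilon/2 = 1 - \varepsilon.
\]
Since $K \times L$ is a compact subset of $\sE_1 \times \sE_2$ and $\mathrm{Proj}_{\sE_1 \times \sE_2}(\nu)(K \times L) = \nu(K \times L \times \sE_3)$, this is exactly the required uniform lower bound, and tightness of $\mathrm{Proj}_{\sE_1 \times \sE_2}(F)$ follows.

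I do not expect a genuine obstacle here; the content is the observation that the $\mathrm{IC}$ condition is precisely what converts a uniform integral bound into uniform tightness in the second coordinate. The only points needing care are bookkeeping ones: choosing $M$ so that $m/M$ matches the desired $\varepsilon/2$, and noting that the compact $K$ must be fixed \emph{before} invoking $\mathrm{IC}$, so that the resulting $L$ depends only on $\varepsilon$ (through $K$ and $M$) and not on the individual $\nu$, which is what makes the final bound uniform over $F$.
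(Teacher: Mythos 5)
Your proof is correct: the paper itself offers no proof of this statement (it is imported by citation from \cite[Lemma 4.5]{GuYuBaLa15}), and your argument --- compactness $K$ from tightness of $F_1$, then the $\mathrm{IC}$ condition with $M = 2m/\varepsilon$ plus a Markov-type bound to control $\nu(K \times L^c \times \sE_3)$ --- is exactly the standard argument used in that reference. No gaps; the bookkeeping (fixing $K$ before invoking $\mathrm{IC}$, so that $L$ is uniform over $\nu \in F$) is handled correctly.
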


\begin{theorem}
Suppose \emph{either }of the following conditions hold:
\begin{itemize}
\item [(i)] $\sU_i$ is compact for all $i$.
\item [(ii)] For non-compact case, we assume
\begin{itemize}
\item [(a)] The function $c(\,\cdot\,,x_0,\,\cdot\,,\,\cdot\,)$ is in class $\mathrm{IC}(\sY,\sU)$ for any $x_0 \in \sX_0$.
\item [(b)] For any compact $K \subset \sY$, $\inf_{(\by,x) \in K\times\sX} \prod_{i=1}^N q_i(y_i,x) > 0$.
\end{itemize}
\end{itemize}
Then, for all $r \in \R$, the set $\C_r \triangleq \{\lambda \in \Lambda: M(\lambda,x_0) \leq r\}$ is compact for any $x_0 \in \sX_0$; that is, $M(x_0,\,\cdot\,)$ is inf-compact for all $x_0 \in \sX_0$. Therefore, there exists an optimal team decision rule for $\mathbf{(SC)}$, and so, for $\mathbf{(S)}$. 
\end{theorem}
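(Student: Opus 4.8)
The plan is to reduce the existence claim to the inf-compactness of $M(x_0,\,\cdot\,)$: once each sublevel set $\C_r$ is shown to be $w^*$-compact, existence of a measurable minimizer $\lambda^*\in\A=\mathbf{\Gamma}$ follows from the measurable selection theorem \cite[Theorem 2.2]{FeKa21}, using that $M(x_0,\,\cdot\,)$ is lower semi-continuous (Theorem~\ref{main1}) and that $M$ is jointly measurable in $(x_0,\lambda)$ (which follows from the joint measurability of $L$ in Lemma~\ref{measurable} together with Fubini's theorem for the kernel $\rP(\,\cdot\,|x_0)$). Such a $\lambda^*$ is then optimal for $\mathbf{(SC)}$, hence for $\mathbf{(S)}$. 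In every case $\C_r$ is $w^*$-closed because $M(x_0,\,\cdot\,)$ is lower semi-continuous by Theorem~\ref{main1}, so the only remaining issue is (relative) compactness.

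Under (i), each $\sU_i$ is compact, so each $\Lambda_i=\rL_\infty(\mu_i,\P(\sU_i))$ is $w^*$-closed and hence $w^*$-compact, and therefore $\Lambda$ is compact; a $w^*$-closed subset $\C_r$ of a compact set is compact, and we are done. The substantive case is (ii), where I would invoke Lemma~\ref{prf1}: it suffices to prove that the family $\P_{\C_r}=\{\eta_\lambda:\lambda\in\C_r\}$ of product measures $\eta_\lambda\triangleq\prod_{i=1}^N\lambda_i(y_i)(du_i)\otimes\mu_i(dy_i)$ on $\sY\times\sU$ is tight. Indeed, given tightness, Lemma~\ref{prf1} guarantees that the $w^*$-closure of $\C_r$ stays inside $\Lambda$; combining this with the relative sequential compactness and metrizability of $\Lambda$ and the lower semi-continuity of $M$, any sequence in $\C_r$ admits a $w^*$-convergent subsequence whose limit lies in $\Lambda$ and, by lower semi-continuity, again in $\C_r$, so $\C_r$ is sequentially compact and thus compact.

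The key step, and the main obstacle, is that the constraint $M(x_0,\lambda)\le r$ is expressed through the $q$-weighted and $x$-integrated measure, whereas Lemma~\ref{prf1} requires tightness of the unweighted product measures $\eta_\lambda$. To bridge this I would first marginalize out $x$: by Tonelli's theorem,
\begin{align}
M(x_0,\lambda)=\int_{\sY\times\sU}\varphi(\by,\bu)\,\eta_\lambda(d\by,d\bu),\qquad \varphi(\by,\bu)\triangleq\int_{\sX}c(x,x_0,\by,\bu)\prod_{i=1}^N q_i(y_i,x)\,\rP(dx|x_0). \nonumber
\end{align}
The crucial claim is that $\varphi\in\mathrm{IC}(\sY,\sU)$ (with trivial third coordinate). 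To prove it, fix $M>0$ and a compact $K\subset\sY$; assumption (ii)(b) gives $\beta_K\triangleq\inf_{(\by,x)\in K\times\sX}\prod_{i=1}^N q_i(y_i,x)>0$, and assumption (ii)(a) supplies, for the level $M/\beta_K$ and the compact $K$, a compact $L\subset\sU$ with $c(x,x_0,\by,\bu)\ge M/\beta_K$ on $K\times L^c\times\sX$; multiplying these two lower bounds and integrating the probability measure $\rP(\,\cdot\,|x_0)$ yields $\varphi\ge M$ on $K\times L^c$. This is exactly where both hypotheses of (ii) are used in tandem, and it is the place where the argument could break down without the uniform positivity (ii)(b).

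With the IC property established, I would apply Theorem~\ref{aux5} with $\sE_1=\sY$, $\sE_2=\sU$, trivial $\sE_3$, cost $\varphi$, level $m=r$, and $F_1=\{\prod_{i=1}^N\mu_i\}$. Here $F_1$ is a single probability measure on the Polish space $\sY$, hence tight; moreover $\mathrm{Proj}_{\sY}(\eta_\lambda)=\prod_{i=1}^N\mu_i\in F_1$ since each $\lambda_i(y_i)$ is a probability measure, and $\int\varphi\,d\eta_\lambda=M(x_0,\lambda)\le r$ for $\lambda\in\C_r$. Theorem~\ref{aux5} then gives that $\P_{\C_r}$ is tight, which closes the argument of the previous paragraph and establishes inf-compactness of $M(x_0,\,\cdot\,)$; the measurable selection step then produces the optimal team decision rule. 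The same tightness can alternatively be obtained directly, by bounding $\eta_\lambda(K\times L^c)\le r/(M\beta_K)$ through the IC estimate together with (ii)(b), and controlling $\eta_\lambda(K^c\times\sU)=\prod_{i=1}^N\mu_i(K^c)$ via tightness of the fixed marginal $\prod_{i=1}^N\mu_i$.
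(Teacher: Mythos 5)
Your proposal is correct and takes essentially the same route as the paper: reduce inf-compactness to tightness of $\P_{\C_r}$ via the class $\mathrm{IC}$ and Theorem~\ref{aux5}, then combine Lemma~\ref{prf1} with the relative (sequential) compactness and metrizability of $\Lambda$ and the lower semi-continuity of $M(x_0,\,\cdot\,)$ from Theorem~\ref{main1}, and finish with measurable selection. The only cosmetic difference is in case (ii): you marginalize $x$ out first and verify that $\varphi(\by,\bu)=\int_{\sX} c\,\prod_{i=1}^N q_i\,\rP(dx|x_0)$ lies in $\mathrm{IC}(\sY,\sU)$ with a trivial third coordinate, whereas the paper keeps $\sX$ as the third coordinate $\sE_3$ and applies Theorem~\ref{aux5} directly to $\tilde{c}=c\,\prod_{i=1}^N q_i$ on $\sY\times\sU\times\sX$; both hinge on the identical use of hypotheses (ii)(a) and (ii)(b).
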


\begin{proof}
(i): If $\sU_i$ is compact for all $i$, then $\P_{\C_r}$ is automatically tight since marginal on $
\sY$ is fixed for any probability measure in $\P_{\C_r}$. Hence, by Lemma~\ref{prf1}, the closure of $\C_r$ is in $\Lambda$ and compact with respect to  $w^*$-topology as $\Lambda$ is relatively compact. But since $M(x_0,\,\cdot\,)$ is lower semi-continuous, $\C_r$ is closed. Hence, $\C_r$ is compact. \\
\noindent(ii):
We define $\tilde{c}(x,x_0,\by,\bu) \coloneqq c(x,x_0,\by,\bu) \prod_{i=1}^N q_i(y_i,x)$. Since for any compact set $K \subset \sY$, we have $\inf_{(\by,x) \in K\times\sX} \prod_{i=1}^N q_i(y_i,x) > 0$, $\tilde{c}$ is in class $\mathrm{IC}(\sY,\sU)$ for any $x_0 \in \sX_0$. Define
$$
F = \left\{\nu \in \P(\sY\times\sU\times\sX): \mathrm{Proj}_{\sY}(\nu) = \prod_{i=1}^N \mu_i \text{ and } \int \tilde{c} \, d\nu \leq r \right\}. 
$$
Then, by Theorem~\ref{aux5}, $\mathrm{Proj}_{\sY\times\sU}(F)$ is tight. But obviously, $\P_{\C_r}$ is a subset of $\mathrm{Proj}_{\sY\times\sU}(F)$ as $M(x_0,\,\cdot\,) \leq r$ on $\C_r$, and so, $\P_{\C_r}$ is tight. Hence, by Lemma~\ref{prf1}, the closure of $\C_r$ is in $\Lambda$ and compact with respect to  $w^*$-topology as $\Lambda$ is relatively compact. But since $M(x_0,\,\cdot\,)$ is lower semi-continuous, $\C_r$ is closed. Hence, $\C_r$ is compact.
\end{proof}

\begin{remark}
As discussed in the introduction, a similar existence result can be achieved by applying the strategic measure approach introduced in \cite{YuSa17} to team problems. In this approach, when we have a common observation $x_0 \in \sX_0$, the problem can be viewed as a classical totally decentralized static team problem. By utilizing the topology defined in \cite{YuSa17} for strategic measures, we can demonstrate the existence of an optimal strategic measure $P_{x_0}$ for any $x_0$, subject to similar assumptions as those presented in this paper. Then, employing a measurable selection theorem, we can establish that the mapping $x_0 \rightarrow P_{x_0}$ is measurable. Consequently, $\mu_0(dx_0) \otimes P_{x_0}(dx,d\by,d\bu)$ emerges as the optimal strategic measure for the observation-sharing team problem.
\end{remark}

\section{Conclusion}

In conclusion, this paper has explored the existence of team-optimal strategies for static teams operating within observation-sharing information structures. Our approach, assuming that agents having access to shared observations, involves the transformation of the team problem into an equivalent centralized stochastic control problem through the introduction of a topology on policies. By employing conventional stochastic control techniques, we have successfully demonstrated the existence of team-optimal strategies. This research extends the well-established common information approach, initially tailored for discrete scenarios, to a more abstract continuous framework. The central challenge in this endeavor is the identification of the most suitable topology on policies.

\section{Acknowledgment}
The author is grateful to Serdar Y\"{u}ksel for his constructive comments and the late Ari Arapostathis for a very informative discussion on topologies on Markov policies.


\end{document}